\newcommand{\arabicsubsubsection}{\arabic{subsubsection}}
\newcounter{subsubsubsection}[subsubsection]
\renewcommand\thesubsubsubsection{%
  \arabicsubsubsection\alph{subsubsubsection})%
}
\titleclass{\subsubsubsection}{straight}[\subsubsection]
\newtheorem{remark}{Remark}
\newif\ifdraftmode
\newcommand{\ifdraft}[2]{%
  \ifdraftmode
    #1%
  \else
    #2%
  \fi
}
\newcommand{\scriptW}{\mathcal{W}}
\newcommand{\scriptN}{\mathcal{N}}
\newcommand{\scriptX}{\mathcal{X}}
\newcommand{\R}{\mathbb{R}}
\newcommand{\F}{\mathbb{F}}
\newcommand{\nvect}{\mathbf{n}}
\newcommand{\xvect}{\mathbf{x}}
\newcommand{\yvect}{\mathbf{y}}
\newcommand{\zvect}{\mathbf{z}}
\newcommand{\evect}{\mathbf{e}}
\newcommand{\cvect}{\mathbf{c}}
\newcommand{\qvect}{\mathbf{q}}
\newcommand{\gvect}{\mathbf{g}}
\newcommand{\svect}{\mathbf{s}}
\newcommand{\Thetavect}{\bm{\Theta}}
\newcommand{\epsvect}{\bm{\epsilon}}
\newcommand{\avect}{\bm{\alpha}}
\newcommand{\rhovect}{\bm{\rho}}
\newcommand{\snrb}{\text{SNR}_b}
\newcommand{\eye}{\mathbf{I}}
\newcommand{\Fmat}{\mathbf{F}}
\newcommand{\Cmat}{\mathbf{C}}
\newcommand{\Rmat}{\mathbf{R}}
\newcommand{\paren}[1]{\left(#1\right)} 
\newcommand{\card}[1]{\text{card}\left(#1\right)}
\newcommand{\Prob}[1]{\mathbb{P}\left(#1\right)}
\newcommand{\expect}[1]{\mathbb{E}\left(#1\right)}
\newcommand{\lightcode}{\textsc{LightCode}}
\newcommand{\lightBCsep}{{\textsc{LightCode-BC-Sep}}}
\newcommand{\lightBC}{{\textsc{LightCode-BC}}}
\newcommand{\rpcBC}{{\textsc{RPC-BC}}}
\newtheorem{lemma}{Lemma}
\newacronym{ol}{OL}{Ozarow-Leung}
\newacronym{eol}{EOL}{extended Ozarow-Leung}
\newacronym{lqg}{LQG}{linear quadratic Gaussian}
\newacronym{cl}{CL}{Chance-Love}
\newacronym{bcl}{BMCL}{broadcast Malayter-Chance-Love}
\newacronym{bler}{BLER}{block error rate}
\newacronym{awgnbcf}{AWGN-BC-F}{additive white Gaussian broadcast channel with channel output feedback}
\newacronym{awgnbc}{AWGN-BC}{additive white Gaussian broadcast channel}
\newacronym{awgn}{AWGN}{additive white Gaussian noise}
\newacronym{awgnf}{AWGN-F}{AWGN channel with output feedback}
\newacronym{su}{SU}{single-user}
\newacronym{dlecfc}{DL-ECFC}{deep-learned error-correcting feedback codes}
\newacronym{ecc}{ECC}{error correction code}
\newacronym{snr}{SNR}{signal-to-noise ratio}
\newacronym{pam}{PAM}{pulse amplitude modulation}
\newacronym{tdd}{TDD}{time division duplexing}
\newacronym{m2m}{M2M}{machine-to-machine}
\newacronym{iot}{IoT}{Internet-of-Things}
\newacronym{mac}{MAC}{multiple-access channel}
\newacronym{mmse}{MMSE}{minimum mean squared error}
\newacronym{lmmse}{LMMSE}{linear minimum mean squared error}
\newacronym{dare}{DARE}{discrete algebraic Riccati equation}
\newacronym{rnn}{RNN}{recurrent neural network}
\newacronym{gru}{GRU}{gated recurrent unit}
\newacronym{fe}{FE}{feature extractor}
\newacronym{mlp}{MLP}{multilayer perceptron}
\newacronym{cce}{CCE}{categorical cross-entropy}
\newacronym{ai}{AI}{artifical intelligence}
\begin{document}

\title{\fontsize{24}{20}\selectfont Deep Broadcast Feedback Codes} 

\author{Jacqueline Malayter,~\IEEEmembership{Student Member,~IEEE}, Yingyao Zhou,~\IEEEmembership{Student Member,~IEEE,} Natasha Devroye,~\IEEEmembership{Fellow,~IEEE}, Chih-Chun Wang,~\IEEEmembership{Fellow,~IEEE,} Christopher Brinton,~\IEEEmembership{Senior Member,~IEEE,} David J. Love,~\IEEEmembership{Fellow,~IEEE}
\thanks{Jacqueline Malayter, Chih-Chun Wang, Christopher Brinton, and David J. Love are with the Elmore Family School of Electrical Engineering,
Purdue University, West Lafayette 47907, USA (emails: malayter@purdue.edu, chihw@purdue.edu, cgb@purdue.edu, djlove@purdue.edu).}
\thanks{Yingyao Zhou and Natasha Devroye are with the Electrical and Computer Engineering Department at the University of Illinois Chicago, Chicago, 60607, USA (emails: yzhou238@uic.edu, devroye@uic.edu).}
\thanks{\textit{(Jacqueline Malayter and Yingyao Zhou are co-first authors) (Corresponding author: Jacqueline Malayter)}}

}



\maketitle

\begin{abstract}
Recent advances in deep learning for wireless communications have renewed interest in channel output feedback codes. In the additive white Gaussian broadcast channel with feedback (AWGN-BC-F), feedback can expand the channel capacity region beyond that of the no-feedback case, but linear analytical codes perform poorly with even small amounts of feedback noise. Deep learning enables the design of nonlinear feedback codes that are more resilient to feedback noise. We extend single-user learned feedback codes for the AWGN channel to the broadcast setting, and compare their performance with existing analytical codes, as well as a newly proposed analytical scheme inspired by the learned schemes. Our results show that, for a fixed code rate, learned codes outperform analytical codes at the same blocklength by using power-efficient nonlinear structures and are more robust to feedback noise. 
Analytical codes scale more easily to larger blocklengths with perfect feedback and surpass learned codes at higher SNRs. 
\end{abstract}

\begin{IEEEkeywords}
Gaussian broadcast channel with feedback, deep learning, channel output feedback codes, nonlinear codes, finite blocklength
\end{IEEEkeywords}

\section{Introduction}\label{sec:Introduction}
\IEEEPARstart{T}{he} evolution from fifth-generation New Radio (5G-NR) to sixth-generation (6G) networks will be characterized by data-driven designs that address the challenges of growing communication network complexity. \Gls{ai}-based applications create increasingly heterogeneous networks, warranting more flexible architectures. In 6G, \gls{ai}-native designs will be critical to enable intelligent, adaptive, and data-driven communication systems~\cite{brinton2025key,chowdhury6G}. The rise of autonomous systems and distributed learning applications will further drive a surge in \gls{m2m} and \gls{iot} communications, necessitating ultra-reliable and low-latency spectrum sharing methods~\cite{chowdhury6G}. For these applications, data packets are often short due to latency constraints and the nature of the data itself~\cite{durisiToward}. Consequently, 6G must leverage \gls{ai}, particularly deep learning, to design networks capable of meeting the demands of data-intensive \gls{ai} applications.


 \Glspl{ecc} will be crucial in 6G networks to ensure reliable data transmission, achieving low probabilities of error at high code rates to maximize throughput. Wireless access points supporting massive connectivity may use shorter packet sizes to reduce transmission latency~\cite{brinton2025key}. Deep learning can further enhance \gls{ecc} design in 6G by enabling efficient and reliable codes at short blocklengths. Broadcast channel codes are of particular interest, as they encode messages of multiple users into a single codeword, allowing simultaneous and reliable communication while improving spectral efficiency.

Cover first analyzed broadcast channels in~\cite{CoverBroadcast}, where a single source communicates simultaneously with multiple receivers. Although the capacity region of a general broadcast channel remains an open problem, it has been characterized for the \gls{awgn} case. In particular, Cover showed that the achievable rate region of the degraded \gls{awgn} broadcast channel extends beyond that obtained by simple time-sharing between two users~\cite{CoverBroadcast}. By contrast, in the symmetric \gls{awgn} case, where receivers have the same \gls{snr}, broadcast coding provides no advantage over time-sharing. However, the addition of channel output feedback \textit{changes} the achievable rate region of broadcast channels. While feedback does not increase the capacity of the single-user \gls{awgnf} channel~\cite{shannon1956zero}, it can enlarge the capacity region of the \gls{awgnbcf} when feedback is available from each receiver~\cite{dueck,ozarowAchievable,bhakaranGBCF}.

The enlarged capacity region of \gls{awgnbcf} codes makes them attractive for 6G applications. Typically, codes for the \gls{awgnbcf} have  been linear due to their simplicity, but linearity limits the design of optimal codes~\cite{ahmad2015concatenated}. Ozarow and Leung were the first to demonstrate the enlarged capacity region enabled by perfect feedback in the broadcast channel~\cite{ozarowAchievable}. They generalized the seminal \gls{lmmse}-based Schalkwijk-Kailath (SK) scheme from the \gls{awgnf} to the \gls{awgnbcf}, achieving doubly exponential error decay with blocklength~\cite{schalkwijk1966coding,ozarowAchievable}. Kramer later generalized \gls{lmmse}-based feedback codes to more than two users~\cite{Kramer2002}. Other schemes based on linear control theory further expand the capacity region beyond Ozarow's  scheme~\cite{elia2004bode,ardestanizadeh2012lqg,wu2005gaussian}. Moreover, Belhadj Amor \textit{et al.}~\cite{amorDuality} showed that linear quadratic Gaussian (LQG)-based schemes are sum-rate optimal among all linear-feedback codes for the symmetric single-antenna \gls{awgnbcf} with equal channel gains. In Gastpar \textit{et al.}~\cite{Gastpar}, the \gls{awgnbcf} is analyzed for the case of correlated noise, where a novel coding scheme based on successive noise cancellation is proposed. Other works have analyzed the discrete memoryless broadcast channel with feedback, such as in~\cite{WuRatelimited,shayevitz}, but we focus on the \gls{awgnbcf} in this work. 

Unfortunately, purely \textit{linear} feedback codes for the \gls{awgnbcf} fail to achieve positive rates with imperfect feedback, and their performance degrades quickly with increasing feedback noise power \cite{KimGaussNF,chance2011concatenated,ahmad2015concatenated}. Nonetheless, the addition of noisy feedback in the \gls{awgnbcf} can still extend the capacity region beyond that without feedback, if the coding schemes are not purely linear \cite{RaviCapacityGBC}. In the single user \gls{awgnf}, Chance \textit{et al.} leverage non-linearity by proposing a concatenated coding scheme based on concatenating an SNR-maximizing linear code, resulting in improved error exponent bounds over linear feedback schemes. In Farthofer \textit{et al.}, the authors propose a scheme utilizing quantization at the receiver input and the feedback channel input, demonstrating positive rates achievable rates with noisy feedback \cite{Farthofer2014}.  In another line of work by Mishra \textit{et. al} \cite{dynamicMishra}, dynamic programming is used to derive a closed-form optimal scheme for the \gls{awgnf} with noisy feedback. In \cite{vasal2021dynamic}, it is shown a dynamic program also exists to solve the capacity expression for the \gls{mac} channel. For the \gls{awgnbcf} with noisy feedback, Ahmad \textit{et al.}~\cite{ahmad2015concatenated} proposed a concatenated scheme that uses \gls{snr}-maximizing linear codes as the inner code, achieving rates beyond the no-feedback capacity region. On the other hand, recent advances in deep learning have enabled solutions to seemingly intractable wireless problems and opened new directions for designing high-performance non-linear codes. For example, in the \gls{su} \gls{awgnf}, \gls{dlecfc}~\cite{kim2018deepcode, safavi2021deep, mashhadi2021drf, shao2023attentioncode, ozfatura2022all, kim2023robust, ankireddy2024lightcode, ozfatura2023feedback, chahine2022inventing, lai2024variable} have been proposed for short blocklengths, shining in the low \gls{snr} regime and outperforming traditional analytical codes in reliability. 

Learned codes for the \gls{su} \gls{awgnf} have been rapidly gaining attention, because, due to their non-linearity, their performance does not suffer as severely with feedback noise as compared to traditional analytical feedback codes. These \gls{su} codes can be classified into two categories: \textit{bit-by-bit} and \textit{symbol-by-symbol} schemes. Bit-by-bit schemes~\cite{kim2018deepcode, safavi2021deep, mashhadi2021drf, shao2023attentioncode} apply deep learning to jointly encode the channel output feedback and message bits, transmitting one bit at a time sequentially (e.g., 50 bits), as introduced in DeepCode~\cite{kim2018deepcode}. In contrast, symbol-by-symbol schemes~\cite{ozfatura2022all, kim2023robust, ankireddy2024lightcode} map a block of message bits to a single symbol and achieve variable code rates by adapting the number of channel uses. 
Symbol-by-symbol schemes reduce the total number of transmissions, including both the forward transmission of codewords and the backward reception of feedback, while achieving improved performance compared to bit-by-bit schemes. Among symbol-by-symbol schemes, {\lightcode}~\cite{ankireddy2024lightcode} achieves the current state-of-the-art performance for the \gls{awgnf}. Compared to other symbol-by-symbol codes, such as the transformer-based Generalized Block Attention Feedback (GBAF)~\cite{ozfatura2022all} and the recurrent neural network (RNN)-based Robust Power-Constrained (RPC) code~\cite{kim2023robust}, {\lightcode} is relatively lightweight, with much fewer parameters and lower complexity. Most of these codes, including the state-of-the-art {\lightcode}, operate at short blocklengths, making them amenable for low-latency applications.

The \glspl{dlecfc} for the \gls{awgnf} outperform traditional linear schemes in terms of reliability at finite blocklengths, particularly when the feedback noise is high and/or the forward \gls{snr} is low. However, only a few deep learning–based implementations have been applied to multi-user channels. For instance, Li \textit{et al.} generalized DeepCode to the fading \gls{awgnbcf} with two users~\cite{li2022deep}. Similarly, Ozfatura \textit{et al.} extended GBAF to the \gls{mac} with feedback~\cite{ozfatura23}.


Deep learning–based codes are \textit{particularly} promising for the \gls{awgnbcf}, where feedback provides capacity gains over the \gls{awgnbc}. In most practical systems, feedback is noisy, and linear \gls{awgnbcf} codes suffer rapid performance degradation even at feedback noise power as low as $-30$ dB. This motivates the use of \glspl{dlecfc} to design robust nonlinear \gls{awgnbcf} codes that maintain high reliability under noisy feedback conditions.

\textit{Main Contributions}: In this paper, we construct finite blocklength codes for the \gls{awgnbcf}, and evaluate them based on their \textit{block error rate} at a fixed code rate. We compare the performance of \textit{both} linear codes and \gls{dlecfc} at various levels of feedback noise. Our contributions are: 
\begin{itemize}
    \item We propose a new linear scheme for the \gls{awgnbcf} which outperforms existing schemes at finite blocklengths with feedback noise.
    \item We propose a general framework for the 2-user \gls{awgnbcf} channel and show how RPC and {\lightcode} are cast into this setting, building on previous works~\cite{malayter2024deep, zhou2024learned}. We demonstrate improved performance over the \gls{tdd} implementation of these codes. We also provide some interpretations of the nonlinear structure, explaining how feedback manifests in error correction.
    \item Finally, we propose a lightweight training scheme to extend the proposed \gls{awgnbcf} codes to an $L>2$ user \gls{awgnbcf}, drawing on insights from the proposed linear scheme. This scheme trains using shared decoder weights, removing the need to train unique decoder weights for each user, saving complexity.
\end{itemize}

The paper is organized as follows. Section~\ref{sec:ProblemSetup} introduces the channel model and key definitions. In Section~\ref{sec:Analytical}, we review existing linear codes and introduce a new linear code for the \gls{awgnbcf}. Section~\ref{sec:Neural} presents the details of the \gls{dlecfc}, together with numerical results and their interpretations. Finally, the paper is concluded in Section~\ref{sec:Conclusion}.

\textit{Notation}: A scalar, vector, matrix, and set is denoted by $x$, $\xvect$, $\mathbf{X}$, and $\scriptX$, respectively. The floor of $x$ is denoted $\lfloor x\rfloor$. We denote $x$ modulo $n$ as $\text{mod}{(x,n)}$. The $n$-th element of $\xvect$ is denoted by $\xvect[n]$, and the $(m,n)$-th element of $\textbf{X}$ is denoted by $\textbf{X}[m,n]$. A matrix $\mathbf{X} = diag(x_1,\cdots,x_n)$ is a square matrix with $(x_1,\cdots,x_n)$ on the diagonal and zeros elsewhere. The $\ell_2$ norm of a vector $\xvect$ is denoted $\|\xvect\|_2$ and the Frobenius norm of a matrix $\mathbf{X}$ is denoted $\|\mathbf{X}\|_F$.
The notation $\xvect\perp\yvect$ indicates that vectors 
 $\xvect$ and $\yvect$ are orthogonal, and $\xvect^T$ is the transpose of $\xvect$. The standard unit vector along the $i$-th axis is denoted by $\evect_i$, and the cardinality of a set $\scriptX$ is denoted by $\card{\scriptX}$. The notation $\{x_n\}_{n=1}^N$ is shorthand for the set $\{x_1, \ldots, x_N\}$. We use $\F_2$ to denote the finite field with elements $\{0, 1\}$, $\R$ for the set of real numbers, and $\R^+$ for the set of positive real numbers. Additionally, $\mathrm{sgn}^*(x) = 1$ if $x \geq 0$, and $-1$ otherwise. The notation $\abs{x}$ denotes the absolute value of $x$. $Q(x)$ is the complementary distribution function of $\mathcal{N}(0,1)$, i.e., $Q(x) = \frac{1}{\sqrt{2\pi}} \int_x^\infty \exp\left(-\frac{u^2}{2}\right) \, du$.

\section{Problem Setup}\label{sec:ProblemSetup}


\subsection{Channel Model}
We consider a real-valued $L$-user AWGN broadcast channel with output feedback (\gls{awgnbcf}), as illustrated in Fig.~\ref{fig:SystemModel}. The transmitter jointly encodes $L$ independent, uniformly distributed messages $W_\ell \in \scriptW_\ell$, one intended for each receiver $\ell \in \{1, \ldots,L\}$, into one output message stream, where $\scriptW_\ell$ is the set of all possible messages for user $\ell$. Each message is assumed to satisfy $W_\ell\in \F_2^{K_\ell}$, where $K_\ell$ denotes the message length in bits. A total of $K_1 + \cdots + K_L$ message bits are sent over $N$ channel uses. At the $t$-th channel use, the channel output at receiver $\ell$ is given by 
\begin{equation}
    \yvect_\ell[t] = \xvect[t] + \nvect_\ell^b[t], \quad \ell\in \{1,\ldots, L\}
    \label{eqn:ybroadcast}
\end{equation}
where $\xvect[t] \in \R$ is the transmitted symbol at time $t$, and $\nvect_\ell^b[t]$ is temporally independent and identically distributed (i.i.d) noise with distribution $\nvect^b_\ell[t] \sim \scriptN\paren{0, \sigma_{b}^2}$. The superscript $b$ indicates the noise on the \textit{broadcast} channel. The transmitted symbols satisfy the average power constraint
\begin{equation}
    \frac{1}{N}\expect{\sum_{t=1}^N \xvect^2[t]} \leq P\label{eqn:powerCst}
\end{equation}
where $P$ is the power scaling factor. 

Each receiver sends \textit{passive} feedback to the transmitter, meaning it transmits its most recent received symbol un-encoded through an AWGN channel. The feedback from receiver $\ell$ at time $t$ is given by
\begin{equation}
    \zvect_\ell[t] = \yvect_\ell[t] + \nvect_\ell^f[t]
\end{equation}
where $\nvect^f_\ell[t] \sim \scriptN\paren{0, \sigma_{f}^2}$ and is i.i.d in time.  The superscript $f$  denotes noise on the \textit{feedback} link. 

The signal-to-noise ratio (SNR) on the broadcast channel is defined as
\begin{equation}
    \snrb = \frac{P}{\sigma_b^2}.
\end{equation}
In our results and analysis, we characterize the feedback noise power directly instead of using an SNR metric for the feedback channel.

\begin{figure}[!t]
    \centering
        
    \resizebox{1\columnwidth}{!}{%
   \input{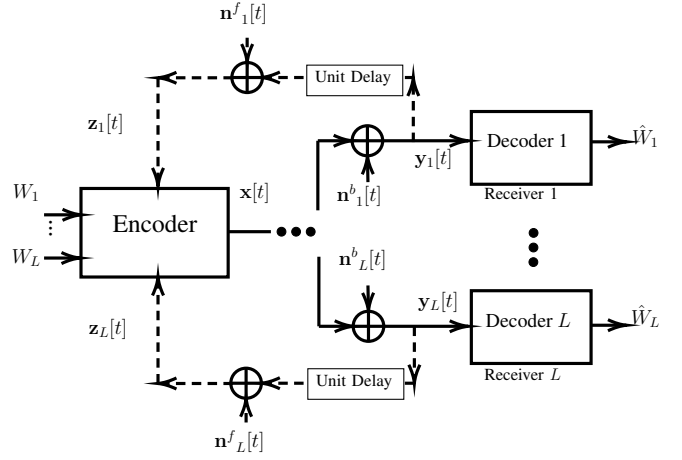}
}
    \caption{$L$-user AWGN broadcast channel with noisy channel output feedback.}
    \label{fig:SystemModel}
\end{figure}

\subsection{Coding Definitions}
The \textit{rate} of user $\ell$ is $R_\ell\in \R^+$, given by $R_\ell = K_\ell/N$. The \textit{sum-rate} is  $R_{sum} = \sum_{\ell = 1}^L R_{\ell}$. Thus, a $(2^{K_1},  \ldots, 2^{K_L},N)$ \textit{code} for the \gls{awgnbcf} consists of: 
\begin{itemize}
    \item A single encoder represented by a set of $N$ functions $\{f_t\}_{t=1}^N$. For the $t$-th channel use, the encoder $f_t$ maps all users’ messages  $\{W_\ell\}_{\ell=1}^L$ and available feedback $\{\zvect_\ell[1], \ldots, \zvect_\ell[t-1]\}_{\ell = 1}^L$ to a symbol $\xvect[t]\in \R$, ensuring that the power constraint in \eqref{eqn:powerCst} is satisfied. The encoding function at time $t$ is given by
    \begin{equation}
        \xvect[t] = f_t\left(\{W_\ell, \zvect_\ell[1], \ldots, \zvect_\ell[t-1]\}_{\ell=1}^L\right).
    \end{equation}
    \item A set of $L$ decoders, where the $\ell$-th decoder is denoted by a decoding function $g_\ell$.  The $\ell$-th decoder maps the received noisy symbols $\{\yvect_\ell[1], \ldots, \yvect_\ell[N]\}$ to the estimated message $\widehat{W}_\ell\in \scriptW_\ell$, where $\widehat{W}_\ell$ is the decoded output for receiver $\ell$.  After user $\ell$ receives all $N$ symbols, the decoding is defined as 
    \begin{align}
        \widehat{W}_\ell = g_\ell\paren{\yvect_\ell[1], \ldots, \yvect_\ell[N]}.
    \end{align}
\end{itemize}
The average \gls{bler} for receiver $\ell$ is defined as $\mathbb{P}_{e,\ell} \coloneqq \Prob{\widehat{W}_\ell \neq W_\ell}$. 
The design of the encoding functions $\{f_t\}_{t=1}^N$ and the decoding function $g_\ell$ for user $\ell$ can be parameterized using analytical linear codes or neural networks to minimize the overall average \gls{bler} across all users subject to the average power constraint in \eqref{eqn:powerCst}. Specifically, for any given encoder-decoder design, the objective function for constructing an \gls{awgnbcf} code is
\begin{align}
\min_{\{f_t\}_{t=1}^N,\, g_1, \ldots, g_L} \quad 
    & \frac{1}{L} \sum_{\ell=1}^{L} \mathbb{P}_{e,\ell}\label{eqn:optProblem} \\
    \textrm{subject to} \quad 
    & \frac{1}{N}\mathbb{E}_{\{W_\ell, \nvect_\ell^b, \nvect_\ell^f \}_{\ell=1}^L} \left( \sum_{t=1}^N \xvect^2[t] \right) \leq P \nonumber,
\end{align}
where the power constraint expectation is taken over the distributions of the input messages $\{W_\ell\}_{\ell=1}^L$ and the noises $\{\nvect_\ell^b\}_{\ell=1}^L$, $\{\nvect_\ell^f\}_{\ell=1}^L$, since $\xvect[t]$ depends on these quantities. In this work, we focus on the symmetric rate region of the \gls{awgnbcf}, where $R = R_1 = \cdots = R_L$.

\section{Analytical Codes}\label{sec:Analytical}

In this section, we review existing analytical coding schemes for the \gls{awgnbcf} and introduce a new linear scheme that improves robustness to feedback noise at non-asymptotic blocklengths. Specifically, we outline three classes of linear codes for the \gls{awgnbcf}: the \gls{ol} scheme, the \gls{lqg} scheme, and \gls{snr}-maximizing schemes.

\subsection{Ozarow-Leung Scheme}
We begin by describing the \gls{ol} scheme, which adapts the \gls{lmmse}-based Schalkwijk-Kailath (SK) scheme~\cite{schalkwijk1966coding} from the AWGN-F to the \gls{awgnbcf} setting for 2 users~\cite{ozarowAchievable}. The \gls{ol} scheme was the first to demonstrate the enlarged capacity region enabled by feedback in the \gls{awgnbcf}, achieving doubly exponential error decay with blocklength $N$~\cite{ozarowAchievable}. The \gls{ol} scheme was extended by Kramer to support more than two receivers, and we refer readers to~\cite{Kramer2002} for details. Next, we introduce the \gls{eol} scheme, which incorporates memory into the estimator used in the \gls{ol} scheme~\cite{murin2014ozarow}, resulting in both improved achievable rates and enhanced reliability.

\subsubsection{\gls{ol} scheme} 
In the \gls{ol} scheme with two receivers ($L = 2$), the messages $W_1 \in \scriptW_1$ and $W_2 \in \scriptW_2$ are mapped to real numbers and transmitted separately over the broadcast channel. Each receiver estimates its message based on the received noisy symbols. In subsequent rounds, the transmitter computes and sends a linear combination of the current user errors using the noiseless feedback. The receivers then update their estimates, gradually reducing the errors.

Specifically, the message bits $W_\ell \in \scriptW_\ell$, where $\ell \in \{1, 2\}$, are mapped to a \gls{pam} symbol $\Theta_\ell$ from the constellation $\{\pm{1\eta}, \pm{3\eta}, \dots, \pm{(2^{K_\ell}-1)\eta}\}$, with $\eta = \sqrt{\frac{3}{2^{2K_\ell}-1}}$ to satisfy the unit power constraint. The estimated message at receiver $\ell$ at time $t$ is denoted by $\widehat{\Thetavect}_{\ell}[t]$, with the estimation error $\epsvect_{\ell}[t] = \widehat{\Thetavect}_{\ell}[t] - \Theta_\ell$ and corresponding mean squared error  $\avect_{\ell}[t] = \expect{\epsvect_{\ell}^2[t]}$. The correlation coefficient between the estimation errors at the two receivers is defined as $\rhovect[t] = \frac{\mathbb{E}(\epsvect_{1}[t] \epsvect_{2}[t])}{\sqrt{\avect_{1}[t] \avect_{2}[t]}}$. In the first two rounds, the transmitter sends the \gls{pam} symbols separately, each with power $P$, given as
\begin{equation}
    \xvect[t] = \sqrt{P}\,\Theta_t, \quad t \in \{1, 2\}.
\end{equation}

After the first two transmissions, receiver 1 disregards the second transmission, while receiver 2 ignores the first. 
The receivers estimate their \gls{pam} symbols using
\begin{equation}
    \widehat{\Thetavect}_\ell[2] = \frac{\sqrt{P}}{P + \sigma_b^2} \, \yvect_\ell[\ell], \quad \ell \in \{1, 2\}.
\end{equation}
For the $t$-th transmission, where $t \geq 3$, the transmitter sends
\ifdraft{
\begin{align}
    \xvect[t] &= \sqrt{\frac{P}{D_{t-1}}}\Bigg[\frac{\epsvect_{1}[t-1]}{\sqrt{\avect_{1}[t-1]}} \label{eqn:OLsymbol} + \frac{\epsvect_{2}[t-1]}{\sqrt{\avect_{2}[t-1]}}g\,\mathrm{sgn}^*(\rhovect[t-1])\Bigg], 
\end{align}
}{
\begin{align}
    \xvect[t] &= \sqrt{\frac{P}{D_{t-1}}}\Bigg[\frac{\epsvect_{1}[t-1]}{\sqrt{\avect_{1}[t-1]}} \label{eqn:OLsymbol} \\&+ \frac{\epsvect_{2}[t-1]}{\sqrt{\avect_{2}[t-1]}}g\,\mathrm{sgn}^*(\rhovect[t-1])\Bigg], \nonumber
\end{align}}
where $D_{t-1} = 1 + g^2 + 2g\abs{\rhovect[t-1]}$, and $g \ge 0$ balances the trade-off between the two users\footnote{We set $g=1$ to achieve similar BLER for both receivers.}.


At receiver $\ell \in \{1,2\}$, a \gls{mmse} estimator uses $\yvect_\ell[t]$ to estimate $\epsvect_{\ell}[t-1]$, and updates the symbol as
\begin{equation}
\widehat{\Thetavect}_{\ell}[t] =   \widehat{\Thetavect}_{\ell}[t-1] - \frac{\expect{\epsvect_{\ell}[t-1]\yvect_\ell[t]}}{\expect{\yvect_\ell^2[t]}}\yvect_\ell[t].
\end{equation}
The required expectations are given by
\ifdraft{\begin{align*}
    &\expect{\yvect_\ell^2[t]} = P + \sigma_{b}^2, \\
    &\expect{\epsvect_{1}[t-1]\, \yvect_1[t]} = \sqrt{\frac{P}{D_{t-1}}} \sqrt{\avect_{1}[t-1]}\,(1 + g\abs{\rhovect[t-1]}), \\
    &\expect{\epsvect_{2}[t-1]\, \yvect_{2}[t]} = \sqrt{\frac{P}{D_{t-1}}} \sqrt{\avect_{2}[t-1]}\,(g + \abs{\rhovect[t-1]})\,\mathrm{sgn}^*(\rhovect[t-1]). 
\end{align*}}{
\begin{equation*}
\resizebox{\columnwidth}{!}{$
\begin{aligned}
    &\expect{\yvect_\ell^2[t]} = P + \sigma_{b}^2, \\
    &\expect{\epsvect_{1}[t-1]\, \yvect_1[t]} = \sqrt{\frac{P}{D_{t-1}}} \sqrt{\avect_{1}[t-1]}\,(1 + g\abs{\rhovect[t-1]}), \\
    &\expect{\epsvect_{2}[t-1]\, \yvect_{2}[t]} = \sqrt{\frac{P}{D_{t-1}}} \sqrt{\avect_{2}[t-1]}\,(g + \abs{\rhovect[t-1]})\,\mathrm{sgn}^*(\rhovect[t-1]). 
\end{aligned}
$}
\end{equation*}}



At a fixed code rate, the performance of the \gls{ol} scheme depends on the message length $K$, which is also observed in other analytical and learned codes. At low SNR, shorter messages perform better due to larger constellation spacing, while at high SNR, longer messages improve performance since the decoding error probability decreases doubly exponentially in blocklength. {However, $K$ cannot be too large due to precision issues and quantization errors associated with $2^{K}$ \gls{pam} modulation. In our experiments, when $K \geq 24$, the BLER starts to increase rather than decrease. To accommodate longer lengths $T$, we select an appropriate value for $K$ to mitigate precision issues by using a small $K$ at low SNR and a large $K$ at high SNR. The length $T$ is then divided into $m = T/K$ chunks of bits, each with message bit length $K$, and each chunk is encoded using the chosen coding scheme. The deep-learned codes also adhere to this rule. However, as the number of rounds increases, the input space expands, which raises the difficulty of training.}  


\subsubsection{\gls{eol} scheme}
 The \gls{eol} scheme~\cite{murin2014ozarow} extends the \gls{ol} scheme by incorporating an \gls{mmse} estimator that exploits both the current and previous outputs, $ \mathbf{Y}_\ell[t] \coloneqq [\yvect_\ell[t], \yvect_\ell[t-1]]^{T} $. Let $\mathbf{Q}_{\ell}[t]$ be the covariance matrix of $\mathbf{Y}_\ell[t]$, then the MMSE estimator is  $\hat{\epsvect}_{\ell}[t-1] = \expect{\epsvect_{\ell}[t-1] | \mathbf{Y}_\ell[t]} = {\expect{\epsvect_{\ell}[t-1]\cdot\mathbf{Y}_\ell^{T}[t]}}\cdot\mathbf{Q}_{\ell}^{-1}[t]\cdot\mathbf{Y}_\ell[t]$. Explicit forms are given in~\cite{murin2014ozarow}.
  


\subsection{Linear Control Based Schemes}

In addition to the \gls{ol} scheme, control-theoretic codes have been developed for the \gls{awgnbcf}. The \gls{lqg} code~\cite{ardestanizadeh2012lqg} is a linear code that achieves a larger rate region than the \gls{ol} scheme. It iteratively refines the receivers' estimates by transmitting a linear combination of estimation errors, choosing estimates to minimize the steady-state power of the channel input rather than using the \gls{mmse}.

For the \gls{lqg} code\footnote{Here, we consider an $L$-receiver real \gls{awgnbcf}.}, each message $W_\ell \in \scriptW_\ell$ is mapped to a \gls{pam} symbol $\Theta_\ell$, uniformly distributed over $[0, 1]$. The vector of symbols is $\bm{\Theta} = [\Theta_1, \ldots, \Theta_L]^{T}$. The linear dynamical system can be described by a matrix $\mathbf{S}\in \R^{L\times N}$, which stores the state of each receiver over time. Letting $\svect_t$ denote the $t$-th column of $\mathbf{S}$, the system evolves according to
\begin{align}
    \mathbf{s}_1 &= \bm{\Theta} \nonumber\\
    \mathbf{s}_t &= \mathbf{A}\mathbf{s}_{t-1} + \mathbf{r}_{t-1}, \quad t = \{2,\ldots, N\},
\end{align}
where $\mathbf{s}_t = [\mathbf{S}[1,t], \ldots, \mathbf{S}[L,t]]^T \in \mathbb{R}^L$ represents the system state at time $t \in \{1,\ldots, N\}$, and $\mathbf{A} = \text{diag}(a_1, \ldots, a_L)\in \mathbb{R}^{L\times L}$ with $\abs{a_i} > 1$. The vector $\mathbf{r}_t$ stores the channel output for each receiver at time $t$, given by $\mathbf{r}_t = \mathbf{b}\xvect[t] + \mathbf{z}_t$, where $\mathbf{b} = [1, \ldots, 1]^{T} \in \mathbb{R}^{L}$ is the channel gain and the $\ell$th index of $\mathbf{z}_t\in \R^L$ is the $\ell$th receiver's noise realization at time $t$.

Given the system, the encoder transmits the symbol
\begin{equation}
    \xvect[t]  = -\mathbf{c}\mathbf{s}_t
\end{equation}
where $\mathbf{c}= (\mathbf{b}^{T}\mathbf{G}\mathbf{b} + 1)^{-1}\mathbf{b}^T\mathbf{GA} \in \mathbb{R}^{1\times L}$, and $\mathbf{G}$ is the unique positive-definite solution to the discrete algebraic Riccati equation. Consequently, the receiver decoder $\ell$ computes the estimates as  
\begin{equation}
    \widehat{\Theta}_\ell = -a_\ell^{-t}\widehat{\mathbf{S}}[\ell,t+1],
\end{equation}
where $\widehat{\mathbf{S}}[\ell,t]$ is updated recursively as
\begin{align}
    \widehat{\mathbf{S}}[\ell,1] &= 0, \nonumber\\
    \widehat{\mathbf{S}}[\ell,t] &= a_\ell\widehat{\mathbf{S}}[\ell,t-1] + \mathbf{z}_{t-1}[\ell].
\end{align}
In the case of two users with uncorrelated noise, $\textbf{A} = diag\paren{a,-a}$ where $a$ is chosen to satisfy the asymptotic average power constraint 
\begin{align*}
    \frac{(a^4-1)(a^2+1)}{2a^2} = \frac{P}{\sigma_b^2},
\end{align*}
but in the short blocklength regime, $a$ is found computationally to satisfy the power constraint.

For the real $L$-receiver \gls{awgnbcf} with independent noise, the \gls{lqg} scheme achieves a symmetric rate $R_1 = \cdots = R_L$ under power constraint $P$ given by \cite{ardestanizadeh2012lqg}
\begin{align}
    R<\frac{1}{2L}\log_2(1 + P \phi), \label{eqn:LQGbound}
\end{align}
where $\phi\in [1,L]$ is the unique solution to 
\begin{align*}
    (1+P\phi)^{L-1} = \paren{1 + \frac{P}{L}\phi(L-\phi)}^L.
\end{align*}
Here, $\phi$ quantifies the cooperation among receivers enabled by feedback~\cite{ardestanizadeh2012lqg}.

\subsection{\gls{snr}-Maximizing Linear Schemes} 
 In addition to the \gls{lqg} and \gls{mmse}-based schemes, another line of work aims to maximize the effective \gls{snr} at each receiver.  In the single user case, the \gls{cl} scheme~\cite{chance2011concatenated} uses a concatenated structure with a linear inner code optimized for the received \gls{snr}. Similar constructions can be extended to the broadcast setting, where Ahmed \textit{et al.}~\cite{ahmad2015concatenated} propose a linear code, which we refer to as the ACLW scheme. However, the ACLW scheme performs poorly at short blocklengths. To address this limitation, we propose the \gls{bcl} scheme, which improves reliability performance at short blocklengths while achieving the \gls{lqg} rate in~\eqref{eqn:LQGbound}.

\subsubsection{SNR-Maximizing Encoding Matrix Form}\label{sec:ourscheme}
Like the \gls{ol} scheme, the intended message for receiver $\ell$, $W_\ell$, is mapped to a \gls{pam} constellation symbol $\Theta_\ell$ which is uniformly distributed. The constellation is power-constrained so that
\begin{align}
\expect{\abs{\Theta_\ell}^2} = (1-\gamma)\frac{NP}{L}.
\end{align}
We refer to $\gamma\in(0,1)$ as a \textit{power sharing parameter}, which balances the power spent on transmitting the information symbol versus the power spent on noise cancellation. The first $L$ transmit symbols are given by 
\begin{align}
    \xvect[t] = \Theta_t, \ t \in \{1, \ldots, L\}.
\end{align}
The feedback information is encoded for the remaining channel uses.

Let $\hat{N} = N-L$, which denotes the number of channel uses for noise cancellation. The received signal for user $\ell$ can then be expressed in vector form by omitting the first $L$ channel uses, except for the $\ell$-th use, which carries the symbol intended for user $\ell$, as 
\ifdraft{\begin{align}
    \yvect_\ell &= \evect_1 \Theta_\ell + \paren{\eye + \Fmat_\ell}\nvect_\ell^b + \Fmat_\ell \nvect_\ell^f \label{eqn:rxSig_analytical} + \sum_{\ell'\in \{1,\cdots, L\}, \ell'\neq \ell} \Fmat_{\ell'}\paren{\nvect_{\ell'}^b + \nvect_{\ell'}^f}, 
\end{align}}{
\begin{align}
    \yvect_\ell &= \evect_1 \Theta_\ell + \paren{\eye + \Fmat_\ell}\nvect_\ell^b + \Fmat_\ell \nvect_\ell^f \label{eqn:rxSig_analytical} \\
    &+ \sum_{\ell'\in \{1,\cdots, L\}, \ell'\neq \ell} \Fmat_{\ell'}\paren{\nvect_{\ell'}^b + \nvect_{\ell'}^f}, \nonumber
\end{align}}
where $\yvect_\ell \in \R^{(\hat{N}+1)\times 1}$, $\evect_1 \in \R^{(\hat{N}+1)}\times 1$, and $\Fmat_\ell
\in \R^{(\hat{N}+1)\times(\hat{N}+1)}$ is the encoding matrix for user $\ell$. To maintain causality, $\Fmat_\ell$ is strictly lower triangular with zeros on the main diagonal.

After $N$ channel uses, receiver $\ell$ uses a linear combiner $\qvect_\ell \in \R^{(\hat{N}+1)\times 1}$ to estimate the symbol, given by
\begin{align}
    \hat{\Theta}_\ell = \qvect_\ell^T \yvect_\ell. \label{eqn:thetaHat}
\end{align}
Each receiver observes the same input–output relation, consisting of the transmitted signal corrupted by correlated Gaussian noise. Therefore, $\Fmat_\ell$ and $\qvect_\ell$ should be designed to maximize the \gls{snr} at the output of the combiner for each receiver. 

The ACLW scheme assumes each user has the same rate $R$, also called the symmetric rate region.  The inner code of the concatenated scheme is constructed using a general encoding matrix of the form
\begin{align}
    \Fmat_\ell = \Cmat_\ell \Fmat, \label{eqn:genEncMTX}
\end{align}
where $\Fmat$ is a base encoding matrix that is the same for all users due to the rate symmetry. Each $\Cmat_\ell$ is constructed as
\begin{align}
    \Cmat_\ell[i,j] = \begin{cases}
                    {\cvect}_\ell[\text{mod}(i,L)], & i=j\\
                    0, & i\neq j,
                    \end{cases} \label{eqn:Cdesign}
\end{align}
where ${\cvect}_\ell$ is the $\ell$-th row of a $L\times L$ Hadamard matrix. The design of $\Cmat_\ell$ is to null interference between users. In the real channel, this spreading code design limits $L$ to a power of 2, but in complex channels, complex Hadamard matrices can be used, so that any $L\geq2$ works~\cite{ahmad2015concatenated}. 

\begin{figure*}
\begin{align}
SNR_\ell(\gamma) = \frac{\frac{1}{L}PN(1-\gamma)}{\sigma_b^2\|\qvect^T\paren{\eye +\Fmat}\|_2^2 + (\sigma_b^2+\sigma_f^2)\sum_{{i = 1,i\neq \ell}}^L\paren{\|\qvect_\ell^T\Cmat_{i}\Fmat\|_2^2} + \sigma_f^2\|\qvect^T\Fmat\|_2^2}, \label{eqn:SNRourScheme}
\end{align}
\end{figure*}

With $\Fmat$ in the form of~\eqref{eqn:genEncMTX}, the \gls{snr} at receiver $\ell$ is given by~\eqref{eqn:SNRourScheme}.  Maximizing this \gls{snr} jointly over parameters $\qvect$, $\Fmat$, and $\gamma$ is an intractable problem. The ACLW scheme uses a step-by-step method that, for a fixed combiner $\qvect$, constructs an encoding matrix $\Fmat$ with structural assumptions designed to null inter-user interference and maximize \gls{snr} using Lagrange multipliers. From our experiments, the inner code performs poorly at short blocklengths and requires optimization over numerous parameters. Although concatenation can be used on this inner code to achieve rate tuples beyond the perfect feedback capacity region, this requires blocklengths that exceed the target blocklengths of this paper.

\subsubsection{Broadcast Malayter-Chance-Love (BMCL)} \label{sec:proposedCode}
We now introduce the Broadcast Malayter-Chance-Love (BMCL) scheme. We consider the symmetric rate region and assume the received signal is of the form in \eqref{eqn:rxSig_analytical}. The symbol estimate is produced by using a linear combiner as in \eqref{eqn:thetaHat}. We first note that the \gls{ol} scheme and ACLW scheme share a spreading-code like structure. For the \gls{ol} scheme, this appears in the $\mathrm{sgn}^*(\cdot)$ term in \eqref{eqn:OLsymbol}, while the ACLW scheme explicitly uses spreading codes in \eqref{eqn:genEncMTX}.  Using the notion that these schemes utilize a spreading code-like structure and generally exhibit doubly exponential error decay (as in the schemes in \cite{ozarowAchievable,schalkwijk1966coding}, for example), we design a lower-triangular encoding matrix parameterized by a variable $\beta$ and the number of users $L$. This $\beta$-parameterized structure resembles the structure of the \gls{cl} scheme, which allows convenient matrix power scaling. Our matrix and combiner design produce the output \gls{snr} expression in \eqref{eqn:SNRourScheme}. We design $\Fmat$ and $\qvect$ so that the first two terms in the denominator of \eqref{eqn:SNRourScheme} decay to 0 with increasing blocklength, and the third term stays somewhat small. Nonetheless, the third term in \eqref{eqn:SNRourScheme} will still grow unbounded with blocklength in the presence of feedback noise due to noise accumulation, as this is inevitable in any linear \gls{awgnbcf} code. 

 We assume a general form of $\Fmat_\ell$ given by
\begin{align}
    \Fmat_\ell = \Cmat_\ell \Fmat \Cmat_\ell^T, \label{eqn:Fspread}
\end{align} 
where $\Cmat_\ell\in\R^{(\hat{N}+1)\times (\hat{N}+1)}$ is constructed by \eqref{eqn:Cdesign}, and $\Fmat_\ell$ is strictly lower triangular.  We also assume a general form of $\qvect_\ell$ as
\begin{align}
    \qvect_\ell^T = \qvect^T\Cmat_\ell,  \label{eqn:qvectwithspreading}
\end{align}
where $\qvect$ is a base combining vector that is the same for all users due to rate symmetry. Since $\Fmat_\ell$ has the same Frobenius norm as $\Fmat$, the base matrix $\Fmat$ determines the transmit power of each user. Accordingly, $\Fmat$ is constrained such that 
\begin{align}
    \| \Fmat \|_F^2 \leq \frac{NP\gamma}{L(\sigma_b^2+\sigma_f^2)}. \label{eqn:FpowConst}
\end{align}


The base encoding matrix $\Fmat\in\R^{(\hat{N}+1)\times (\hat{N}+1)}$ is structured as 
\begin{align}
    \Fmat[t,m] = \begin{cases} 
                0, & t \leq m\\
                \frac{-\paren{1-\beta^{2 L}}}{L\beta}\beta^{L\lfloor\frac{t - m - 1}{L}\rfloor -\text{mod}(t-m-1,L)}, & t > m
                \end{cases}, \label{eqn:Fbase}
\end{align}
where $\beta\in (0,1)$. In the finite blocklength case and for a fixed $\gamma$, we design $\beta$ such that the matrix in \eqref{eqn:Fbase} satisfies \eqref{eqn:FpowConst} and search for $\beta$ via bisection. The power of $\Fmat$ is given by 
\ifdraft{\begin{align}
    \norm{\Fmat}^2_F = \left(\frac{\beta^{2L}-1}{L\beta} \right)^2\sum_{k=0}^{(\hat{N}-1)}\paren{\hat{N}-k} \beta^{2k-4\text{mod}(k,L)}\label{eqn:Fpower}
\end{align}}{
\begin{equation}
\resizebox{\columnwidth}{!}{$
\begin{aligned}
    \norm{\Fmat}^2_F = \left(\frac{\beta^{2L}-1}{L\beta} \right)^2\sum_{k=0}^{(\hat{N}-1)}\paren{\hat{N}-k} \beta^{2k-4\text{mod}(k,L)}\label{eqn:Fpower}
\end{aligned}
$}
\end{equation}}
For asymptotic blocklengths, \eqref{eqn:Fpower} becomes 
\begin{align}
    \underset{N\to\infty}{\lim} \frac{\norm{\Fmat}^2_F}{N} = \frac{(1-\beta^{2L})^2}{L^2\beta^{2L}(1-\beta^2)}, \label{eqn:matrixPowerLimit}
\end{align}
and, for asymptotic blocklengths, we design $\beta$ to satisfy the equation
\begin{align}
     \frac{(1-\beta^{2L})^2}{L^2\beta^{2L}(1-\beta^2)} = \frac{P\gamma}{L(\sigma_b^2+\sigma_f^2)}\label{eqn:betapower}
\end{align}
which guarantees that the power constraint is satisfied asymptotically for a fixed $\gamma$. The following lemma is important for deriving the BMCL scheme maximum achievable rate. 
\begin{lemma}The asymptotic power of the encoding matrix $\Fmat$ \eqref{eqn:matrixPowerLimit} is strictly decreasing with $\beta\in(0,1)$.
\end{lemma}
\begin{proof}
    See appendix. 
\end{proof} 

 With $\qvect_\ell$ in the form of \eqref{eqn:qvectwithspreading}, the noise covariance for any user is given as 
\begin{align}
    \Rmat = \sigma_b^2\paren{\eye + \Fmat+\Fmat^{T}} + \paren{\sigma_b^2 + \sigma_f^2}\paren{\sum_{\ell'=1}^L\Fmat_{\ell'}\Fmat_{\ell'}^{T}}. \label{eqn:NoiseCov}
\end{align}
Thus, the \gls{snr}-maximizing $\qvect$ is given as 
\begin{align}
  \qvect = \frac{\Rmat^{-1}\evect_1}{\evect_1^T\Rmat^{-1}\evect_1}. \label{eqn:qopt}
\end{align}
Asymptotically, the combiner $\qvect$ in \eqref{eqn:qopt} is
\begin{align}
    \qvect_{\infty}= [1, \beta, \beta^2, \ldots, \beta^{\hat{N}}]^T. \label{eqn:qasym}
\end{align}

We now analyze the \gls{snr} using the $\Fmat$ in \eqref{eqn:Fbase} and form of $\qvect$ in \eqref{eqn:qasym}.  Observing the denominator in \eqref{eqn:SNRourScheme}, we let $\boldsymbol{\psi} = \qvect^T\paren{\eye +\Fmat}$ and let $\boldsymbol{\zeta}_{\ell,i} = \qvect_\ell^T\Cmat_i\Fmat$. Then $\|\boldsymbol{\psi}\|_2^2$ is given by
\ifdraft{\begin{align}
&\|\boldsymbol{\psi}\|_2^2\label{eqn:psiTerm}
= \beta^{2\hat N} + 
\sum_{d=0}^{\hat N-1}
\beta^{2(\hat N - d - 1)+4L\,\big\lfloor d/L \big\rfloor}
\left( 1 - \frac{\text{mod}\paren{d,L}+1}{L}\big(1-\beta^{2L}\big) \right)^{2}
\end{align}}{
\begin{align}
&\|\boldsymbol{\psi}\|_2^2\label{eqn:psiTerm}
= \beta^{2\hat N} + \\\nonumber
&
\sum_{d=0}^{\hat N-1}
\beta^{2(\hat N - d - 1)+4L\,\big\lfloor d/L \big\rfloor}
\left( 1 - \frac{\text{mod}\paren{d,L}+1}{L}\big(1-\beta^{2L}\big) \right)^{2}
\end{align}}
and $\sum_{i=1,i\neq \ell}^L\|\boldsymbol{\zeta}_{\ell,i}\|_2^2$ is 
\ifdraft{\begin{align}
    &\sum_{i=1,i\neq \ell}^L\|\boldsymbol{\zeta}_{\ell,i}\|_2^2 =\label{eqn:zetaTerm} \frac{(1-\beta^{2L})}{L^{2}}
\sum_{r=0}^{L-1}
\bigl(Lr - r^{2}\bigr)\,
\beta^{2(\hat N - r)}\,
\paren{1 - \beta^{2L\,\paren{\left\lfloor \tfrac{\hat N - r}{L}\right\rfloor +1}}}. 
\end{align}}{
\begin{align}
    &\sum_{i=1,i\neq \ell}^L\|\boldsymbol{\zeta}_{\ell,i}\|_2^2 =\label{eqn:zetaTerm} \\&\frac{(1-\beta^{2L})}{L^{2}}
\sum_{r=0}^{L-1}
\bigl(Lr - r^{2}\bigr)\,
\beta^{2(\hat N - r)}\,
\paren{1 - \beta^{2L\,\paren{\left\lfloor \tfrac{\hat N - r}{L}\right\rfloor +1}}}. \nonumber
\end{align} }

In the perfect feedback case, the total noise power for any user is given by 
\begin{align*}
   \sigma_{tot}^2 = \sigma_b^2 \paren{\|\boldsymbol{\psi}\|_2^2 + \sum_{i=1,i\neq \ell}^L\|\boldsymbol{\zeta}_{\ell,i}\|_2^2},
\end{align*}
which, for large $\hat{N}$, is approximately 
\begin{align}
    \sigma^2_{tot} \approx c_1\beta^{2\hat{N}}, \label{eqn:NPlargeN}
\end{align}
where $c_1$ is a positive constant.
Finally, the average \gls{bler} for user $\ell$ is
\begin{align}
    \mathbb{P}_{e,\ell} =2\paren{1-\frac{1}{2^{2NR_\ell}}} Q\paren{\sqrt{\frac{6}{(2^{2NR_{\ell}}-1)} SNR_\ell(\gamma)}}. \label{eqn:newbler}
\end{align}
 Now, we can derive the maximum achievable rate of the proposed \gls{bcl} code. 


\begin{lemma}\label{lemma:capacity}
    The $L$-user \gls{bcl} maximum sum rate for an SNR of $\frac{P}{\sigma_b^2}$ and perfect feedback is given by $$C_{sum}\paren{\frac{P}{\sigma_b^2}} = -L\log
    _2\paren{\beta_{\infty}},$$ where $$\beta_\infty = \paren{\beta : \frac{(1-\beta^{2L})^2}{L^2\beta^{2L}(1-\beta^2)} = \frac{P}{\sigma_b^2 L}}.$$ 
\end{lemma}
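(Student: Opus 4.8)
The plan is to pin down, for perfect feedback ($\sigma_f^2=0$), exactly when the average BLER in \eqref{eqn:newbler} vanishes as the blocklength $N\to\infty$, and then to optimize the resulting rate constraint over the power-sharing parameter $\gamma$. Since $Q(x)\to 0$ iff $x\to\infty$, while $Q(x)\to\tfrac12$ (so $\mathbb{P}_{e,\ell}\to 1$) when $x\to 0$, the requirement $\mathbb{P}_{e,\ell}\to 0$ is equivalent to
\[
\frac{SNR_\ell(\gamma)}{2^{2NR_\ell}-1}\longrightarrow\infty ,
\]
so the whole argument reduces to comparing the exponential order in $N$ of $SNR_\ell(\gamma)$ with that of $2^{2NR_\ell}$.

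First I would insert the perfect-feedback form of \eqref{eqn:SNRourScheme}: the numerator is $\tfrac1L PN(1-\gamma)$, and with $\sigma_f^2=0$ its denominator equals $\sigma_{tot}^2=\sigma_b^2\paren{\|\boldsymbol{\psi}\|_2^2+\sum_{i\neq\ell}\|\boldsymbol{\zeta}_{\ell,i}\|_2^2}$. By \eqref{eqn:psiTerm}--\eqref{eqn:zetaTerm} each of these is a finite combination of terms proportional to $\beta^{2(\hat N-r)}$ with $r$ bounded by a constant, so summing the dominant geometric contributions yields the estimate \eqref{eqn:NPlargeN}, $\sigma_{tot}^2=c_1\beta^{2\hat N}(1+o(1))$ with $c_1>0$ depending only on $\beta$ and $L$. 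Hence $SNR_\ell(\gamma)$ grows like $N\beta^{-2\hat N}$ up to a positive constant, and with $\hat N=N-L$ ($L$ fixed) the vanishing condition becomes $N\,\beta^{2L}\paren{\beta\,2^{R_\ell}}^{-2N}\to\infty$. Because a polynomial factor is dominated by any exponential, this holds precisely when $\beta\,2^{R_\ell}\le 1$, i.e. $R_\ell\le-\log_2\beta$; in particular every $R_\ell<-\log_2\beta$ is achievable (and $R_\ell=-\log_2\beta$ is too, with only polynomial divergence of the argument), whereas $R_\ell>-\log_2\beta$ forces $\mathbb{P}_{e,\ell}\to 1$. By the rate symmetry $R_{sum}=LR_\ell$, the scheme supports all sum rates up to $-L\log_2\beta$, where $\beta=\beta(\gamma)$ is the root in $(0,1)$ of \eqref{eqn:betapower} with $\sigma_f^2=0$, namely $\tfrac{(1-\beta^{2L})^2}{L^2\beta^{2L}(1-\beta^2)}=\tfrac{P\gamma}{L\sigma_b^2}$.

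It then remains to optimize over $\gamma\in(0,1)$. By the preceding lemma the map $\beta\mapsto\tfrac{(1-\beta^{2L})^2}{L^2\beta^{2L}(1-\beta^2)}$ is strictly decreasing on $(0,1)$, and since it tends to $+\infty$ as $\beta\to 0^+$ and to $0$ as $\beta\to 1^-$, it is a bijection onto $(0,\infty)$; hence for every $\gamma\in(0,1)$ the power equation has a unique root $\beta(\gamma)\in(0,1)$, and $\beta(\cdot)$ is strictly decreasing in $\gamma$ (a larger $\gamma$ forces a larger left-hand side, hence a smaller $\beta$). Therefore $-L\log_2\beta(\gamma)$ is strictly increasing in $\gamma$, and the best achievable sum rate of the BMCL scheme is $\sup_{\gamma\in(0,1)}\paren{-L\log_2\beta(\gamma)}$. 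Letting $\gamma\uparrow 1$ and using continuity of the root with respect to the right-hand side gives $\beta(\gamma)\downarrow\beta_\infty$, where $\beta_\infty$ solves $\tfrac{(1-\beta^{2L})^2}{L^2\beta^{2L}(1-\beta^2)}=\tfrac{P}{\sigma_b^2 L}$; hence $C_{sum}(P/\sigma_b^2)=-L\log_2\beta_\infty$, as claimed.

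The one delicate step is the asymptotic bookkeeping in the second paragraph: I must check that the $o(1)$ correction and the constant $c_1$ in \eqref{eqn:NPlargeN}, the vanishing factor $(1-\gamma)$ and the polynomial factor $N$ in the numerator of \eqref{eqn:SNRourScheme}, and the shift $\hat N=N-L$ all leave the exponential comparison untouched -- which they do because $\beta\in(0,1)$ makes $SNR_\ell(\gamma)$ grow strictly geometrically in $N$, so no sub-exponential factor can change which side of $1$ the base $\beta\,2^{R_\ell}$ lies on. A point worth stating explicitly is that, since $\gamma$ ranges over the \emph{open} interval $(0,1)$, the value $-L\log_2\beta_\infty$ is a supremum approached as $\gamma\to 1$ rather than a maximum attained at some $\gamma$. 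As an independent sanity check one may also verify that $-L\log_2\beta_\infty$ coincides with the LQG symmetric sum rate $\tfrac12\log_2(1+P\phi)$ appearing in \eqref{eqn:LQGbound}.
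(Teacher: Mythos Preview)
Your proof is correct and follows essentially the same approach as the paper's: both reduce the BLER-vanishing condition to an exponential comparison $\beta^{-2N}$ versus $2^{2NR_\ell}$ via the approximation \eqref{eqn:NPlargeN}, obtain the achievable per-user rate $R_\ell\le -\log_2\beta(\gamma)$, and then invoke the monotonicity lemma to optimize over $\gamma$ by sending $\gamma\to 1$. Your version is in fact more careful than the paper's in justifying the asymptotics (handling the $\hat N=N-L$ shift, the polynomial prefactor, and the $(1-\gamma)$ factor) and in noting explicitly that the supremum over the open interval $(0,1)$ is not attained.
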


\noindent\textit{Proof.} See appendix.

Though the maximum achievable rate of the \gls{bcl} scheme appears to benefit from a linear capacity scaling as the number of users increases, the capacity has a finite bound as the number of users tends to infinity. This limitation is due to the power constraint captured in $\beta_\infty$. The following remark gives the \gls{bcl} maximum sum rate as $L\to\infty$. 

\begin{remark}\label{remark:infiniteusers}
    The \gls{bcl} maximum sum rate for an SNR of $\frac{P}{\sigma_b^2}$ and perfect feedback has a finite limit as the number of users $L$ goes to infinity. That is, 
\begin{align*}
    \underset{L\to\infty}{\lim}C_{sum}\paren{\frac{P}{\sigma_b^2}} = \frac{\alpha}{\ln2} 
\end{align*}
where $\alpha$ satisfies
\begin{align*}
    \frac{(1-e^{-2\alpha})^2}{2\alpha e^{-2\alpha}} = \frac{P}{\sigma_b^2}.
\end{align*}
\end{remark}

\noindent\textit{Proof.} See appendix.

Figure \ref{fig:infiniteUserCapacity} shows the behavior of $C_{sum}\paren{\frac{P}{\sigma_b^2}}$ as a function of the number of users $L$ in the perfect feedback case for various \glspl{snr}. Initially, it can be seen that as the number of users increases, a sum-rate gain is observed, but this eventually flattens to the limit derived in Remark \ref{remark:infiniteusers}. Nonetheless, in all of the cases, there is a sum-rate gain relative to the capacity of the single-user AWGN, with the largest proportional gain corresponding to higher SNR (10 dB). 
\begin{figure}
    \centering
    \includegraphics[width=0.9\linewidth]{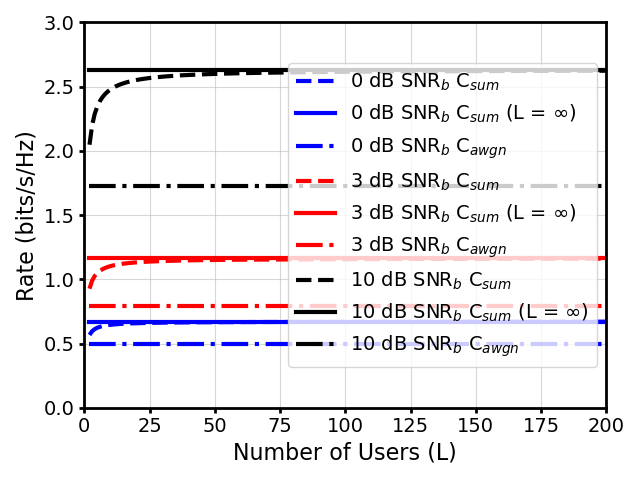}
    \caption{$C_{sum}\paren{\frac{P}{\sigma_b^2}}$ as a function of the number of users $L$ with perfect feedback. Also shown is $\underset{L\to\infty}{\lim}C_{sum}\paren{\frac{P}{\sigma_b^2}}$ from Remark \ref{remark:infiniteusers} and the capacity of the single-user AWGN channel, denoted $C_{awgn}$.}
    \label{fig:infiniteUserCapacity}
\end{figure}

 It can also be shown that $C_{sum}\paren{\frac{P}{\sigma_b^2}}$ is equal to the \gls{lqg} bound in \eqref{eqn:LQGbound} (see Lemma 4 in \cite{ahmad2015concatenated}). This implies that in some regions, the BMCL scheme is optimal over the \gls{ol} scheme in terms of rate \cite{ardestanizadeh2012lqg}. Furthermore, this implies that our scheme, like the \gls{lqg} scheme, is also sum-rate optimal among all linear feedback codes for the symmetric single-antenna \gls{awgnbcf} with equal channel gains \cite{amorDuality}.


In the finite blocklength case, the \gls{snr} expression in \eqref{eqn:SNRourScheme} can be optimized over $\gamma\in(0,1)$ to improve \gls{bler} performance by fixing $\gamma$ and solving for the corresponding $\beta$ that satisfies the power constraint in \eqref{eqn:FpowConst}. The procedure is straightforward in practice and essential for good \gls{bler} performance. Figure~\ref{fig:OurSchemeSumR23noiseless} shows the reliability performance with noiseless feedback for $L=2$ users and $K_1=K_2=K$, with blocklength $N=3K$. The proposed \gls{bcl} code benefits from a blocklength gain, approaching the capacity limit in Lemma~\ref{lemma:capacity}. We also evaluate its performance at short blocklengths with noisy feedback. As shown in Fig.~\ref{fig:OurSchemeSumR23noisy}, the performance degrades with increasing feedback noise but remains reasonably good for low noise ($\sigma_f^2=-30$ dB) and higher forward \gls{snr}.

In Section~\ref{sec:Neural} Fig. \ref{fig:noiselessBCcomparison} and Fig. \ref{fig:noisyfeedbackComparison}, we compare our code with other linear schemes. With perfect feedback, the \gls{lqg} slightly outperforms the \gls{bcl} scheme, with the \gls{eol}, \gls{ol}, \gls{lqg}, and the inner code in the ACLW scheme all exhibiting higher \glspl{bler}. In Fig. \ref{fig:noisyfeedbackComparison}, our proposed scheme outperforms all of the simulated linear codes in the presence of feedback noise.  With a modest increase in blocklength, the \gls{bcl} code can also surpass learned codes under perfect feedback. Nonetheless, the performance of all of the linear codes degrades with feedback noise, motivating the use of \textit{nonlinear} learned feedback codes, which can provide greater robustness and lower \gls{bler} in the \gls{awgnbcf}.

Lastly, one notable difference between the \gls{bcl} scheme and the \gls{lqg} scheme is that the \gls{bcl} scheme uses the first $L$ channel uses to transmit each user's message symbol in an orthogonal TDD-like fashion, where the last $N-L$ channel uses are used for noise cancellation. On the other hand, the \gls{lqg} scheme does not have an apparent TDD initial stage, allowing joint symbol transmission and noise cancellation over all channel uses. Thus, although our proposed scheme utilizes an initial TDD-stage like the \gls{ol}, \gls{eol}, and ACLW schemes, performance could potentially be improved by jointly transmitting each user's message symbol and performing noise cancellation across all $N$ channel uses. This would require designing a time-domain beamforming vector for each user, denoted by $\gvect_\ell\in \R^N$, which determines the power allocation of the $\ell$th user's symbol across channel uses. In fact, we can think of the initial TDD stage of the proposed scheme as a special case of a time-domain beamforming vector. That is, $\gvect_\ell = \evect_1$, as seen in \eqref{eqn:rxSig_analytical}. Nonetheless, our scheme has the same maximum achievable rate as the \gls{lqg} scheme, performs better with feedback noise, and has a ``cleaner" initial transmission protocol, so we leave designing $\gvect_\ell$ as future work.

\begin{figure}[!t]
    \centering    \includegraphics[width=\columnwidth]{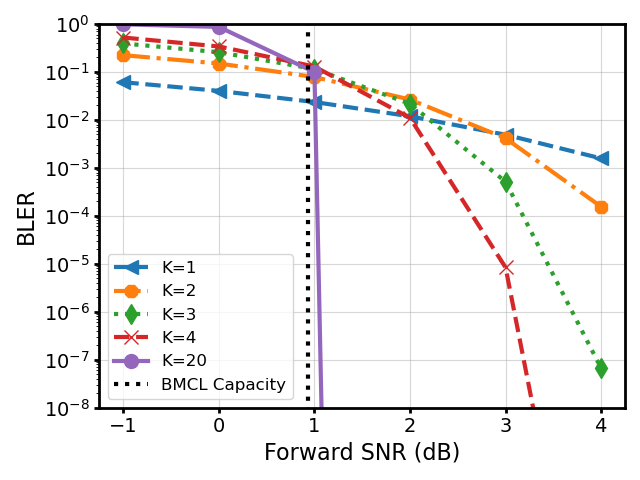}
    \caption{Average \gls{bcl} scheme \gls{bler} per user for various $K_1=K_2=K$ and $N=3K$ with noiseless feedback.}
    \label{fig:OurSchemeSumR23noiseless}
\end{figure}

\begin{figure}[!t]
    \centering
    \includegraphics[width=\columnwidth]{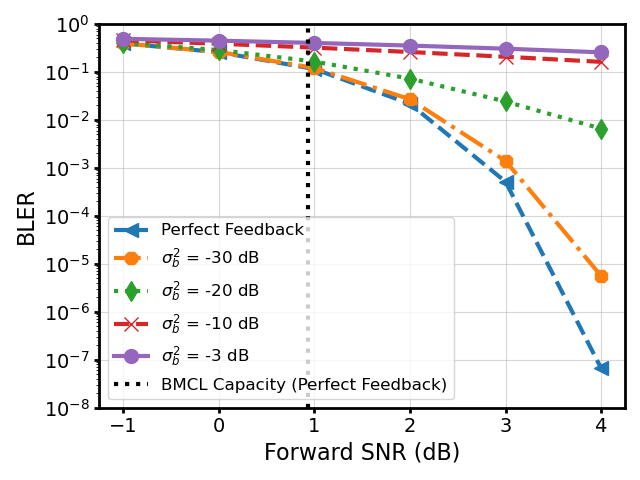}
    \caption{Average \gls{bcl} scheme \gls{bler} per user for $K_1=K_2=3$ and $N=9$ with varying feedback noise powers.}
    \label{fig:OurSchemeSumR23noisy}
\end{figure}






\section{Main Results: Three Neural Code Constructions}\label{sec:Neural}
In this section, we propose three learned feedback codes for the \gls{awgnbcf} that outperform analytical linear schemes of the same blocklength in terms of BLER, particularly under noisy feedback, and provide initial interpretations of how feedback is utilized for error correction. In addition, we introduce a reduced-complexity extension of lightweight codes inspired by the analytical designs\footnote{Learned codes are available at \url{https://github.com/jacquelinemalayter/DeepBroadcastFeedbackCodes}}. 

\subsection{Lightweight codes for the \gls{awgnbcf}}

We introduce two codes, {\lightBCsep} and {\lightBC}, based on \lightcode~\cite{ankireddy2024lightcode}, each consisting of a \gls{fe} and a \gls{mlp} module, with slight differences in the \gls{fe}, as shown in Fig.~\ref{fig:bclight}. While {\lightcode} uses an input dimension of $d_H = 32$, we set $d_H = 64$ to accommodate the doubled input for the \gls{awgnbcf}, which experimentally yields better BLER performance.

\begin{figure}[!t]
    \centering
    \includegraphics[width=1.1\columnwidth]{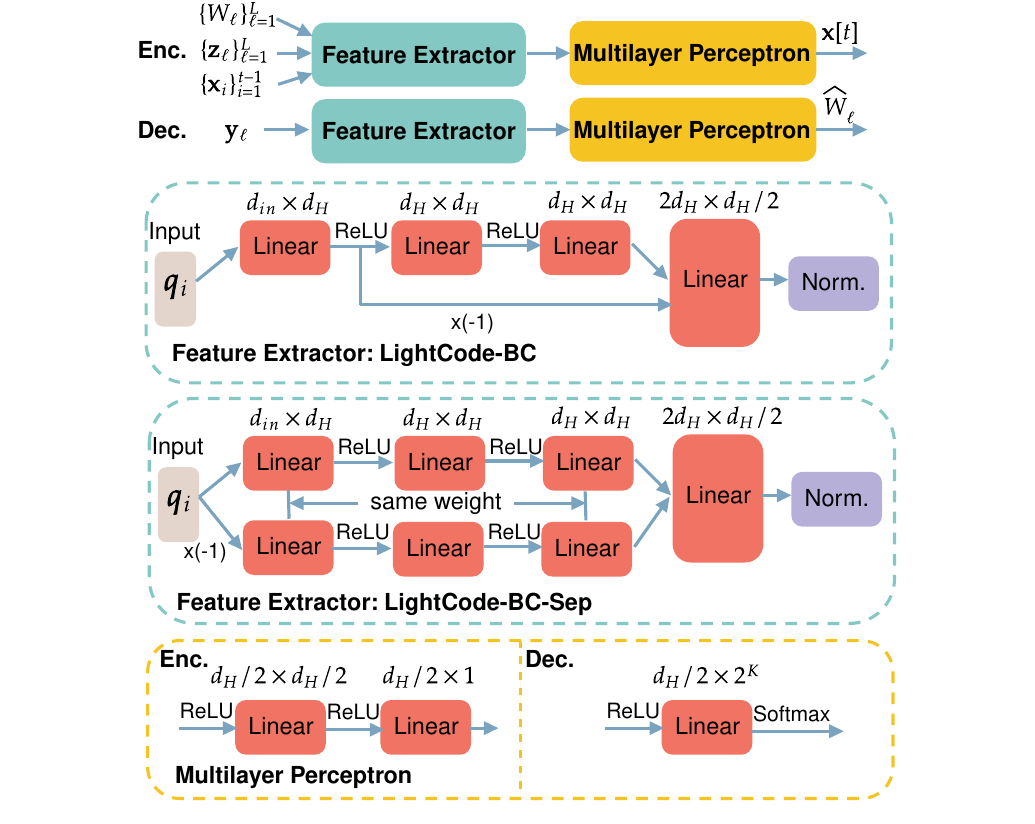}
    \caption{Design (top) and detailed structure (bottom) of the lightweight \gls{awgnbcf} code}
    \label{fig:bclight}
\end{figure}



For encoding, {\lightBCsep}, inspired by the OL scheme~\cite{ozarowAchievable}, transmits the two messages separately in the first two rounds and uses the remaining $N\!-\!2$ rounds for refinement. In contrast, {\lightBC} encodes both messages in the first round and reserves the remaining $N\!-\!1$ rounds for error correction, trading a ``cleaner'' initial message transmission for an additional round of refinement. The parameters highlighted in \textcolor{red}{red} are learnable. Specifically,
\begin{itemize}
    \item {\lightBCsep}: In the first two rounds, each message $W_\ell$ is mapped to a PAM symbol $\Theta_\ell$ and transmitted separately as 
\begin{equation}
    \xvect[t] = \textcolor{red}{\omega_t}\Theta_t, \quad t = 1, 2 \label{eqn:tddround}
\end{equation}
In subsequent rounds ($t = 3, \ldots, N$), the encoder generates symbols based on past transmissions and feedback from both receivers as 
\begin{align}
    \xvect[t] = \textcolor{red}{\omega_t}f_{\textcolor{red}{\alpha_1}}\left(\{\xvect[i],\zvect_1[i], \zvect_2[i]\}_{i=1}^{t-1} \right), 
\end{align}
where $\alpha_1$ denotes the learnable parameters of the encoder. 
    \item {\lightBC}: In the first round, both messages are jointly encoded as 
    \begin{equation}
        \xvect[1] = \textcolor{red}{\omega_1}f_{\textcolor{red}{\alpha_2}}(W_1, W_2),
    \end{equation}
    where $\alpha_2$ denotes the learnable parameters of the encoder. For the remaining rounds ($t=2,\dots,N$), the encoder refines using both messages, past transmissions, and feedback:
    \begin{align}
    \xvect[t] = \textcolor{red}{\omega_t}f_{\textcolor{red}{\alpha_2}}\left(W_1, W_2,\{\xvect[i],\zvect_1[i], \zvect_2[i]\}_{i=1}^{t-1} \right)
\end{align}
\end{itemize}
where $\frac{1}{N}\sum_{t=1}^{N}{\omega_t}^2 = P$ and $\{ \zvect_\ell[i]\}_{i=1}^{t-1}$ represents the feedback from receiver $\ell$. For each receiver, the decoder estimates its message from the received noisy codewords:
\begin{equation}
    \hat{W}_{\ell} =g_{\ell, \textcolor{red}{\phi_{j}}}(\{\yvect_\ell[i]\}_{i=1}^{N}), \quad j = \{1, 2\}
\end{equation}
where $\phi_1$ and $\phi_2$ denote the parameters of {\lightBCsep} and {\lightBC}, respectively.

The \gls{cce} for receiver $\ell$ is defined as
\begin{equation}
    J_{\mathrm{CCE},\ell} = -\frac{1}{B}\sum_{i=1}^B\left(\sum_{j=1}^{C_\ell}p_{\ell,ij}\log(\hat{p}_{\ell,ij})\right)
\end{equation}
where $B$ is the batch size, $C_\ell = 2^{K_\ell}$ is the number of classes (PAM indices) for receiver $\ell$, $p_{\ell, ij}$ is $1$ if class $j$ is the correct label for the $i$-th sample and $0$ otherwise, and $\hat{p}_{\ell, ij} \in \mathbb{R}$ is the predicted probability for that class. The encoder and decoders are jointly trained at the corresponding forward and feedback SNRs to minimize the total loss
\begin{equation}
    J_{\mathrm{light}} = J_{\mathrm{CCE},1} + J_{\mathrm{CCE},2} + \lambda\left(J_{\mathrm{CCE},1} - J_{\mathrm{CCE},2}\right)^2 \label{eqn:lossfcn}
\end{equation} 
where the regularization term can be used to ensure comparable performance across receivers. In \lightBC, $\lambda$ is set to $0$, as each decoders performance is comparable at the end of training and it was experimentally found $\lambda>0$ resulted in inferior \gls{bler}. For training, due to the large dataset size, we generate data stochastically at each iteration instead of relying on a static dataset. That is, we train using per-batch i.i.d. noise realizations with random messages. Training parameters are listed in Table~\ref{tab:training} and model complexity metrics are listed in Table \ref{tab:complexity}.

\begin{table}[!t]
\vspace{6mm}
\caption{Training parameters for each model} \label{tab:training}
\label{params}
\centering
\begin{tabular}{|c|c|c|c|}
\hline
\bfseries Parameters& \bfseries Light-BC-Sep & \bfseries Light-BC &\bfseries RPC-BC\\ \hline
Batch size $B$                     & 100,000 & 100,000 & 50,000\\ \hline
Optimizer                       & AdamW & AdamW & Adam\\ \hline
Weight decay                    & 0.01 & 0.01 & 0 \\ \hline
Epochs                          & 120  & 120 & 120 \\ \hline
Iterations per epoch            & 1000  & 1000 & 2000 \\ \hline
Initial learning rate           & 0.002 & 0.001 & .01\\ \hline
Clipping threshold              & 0.5 & 0.5 & 1 \\ \hline
Power $P$                       & 1 & 1 & 1 \\ \hline
$\lambda$ (regularization)      & 1 & 0 & 0 \\ \hline
\end{tabular}
\end{table}

\subsection{RNN-based codes for the \gls{awgnbcf}}

In addition to the proposed {\lightBCsep} and {\lightBC} codes, we include a \gls{rnn}-based code for comparison. This code is designed to capture temporal correlations in the signal, though at the cost of higher complexity compared to the lightweight codes. We extend RPC~\cite{kim2023robust} to the broadcast setting and denote it by {\rpcBC}, short for robust power-constrained broadcast code. Its structure follows a \textit{state-based} approach, a nonlinear extension of the LQG state-space model used for linear feedback encoding~\cite{kim2023robust}. Specifically, the encoding function is
\begin{align}
    \xvect[t] = \textcolor{red}{\omega_t}f_{\textcolor{red}{\alpha_3}}\paren{\{W_\ell\}_{\ell=1}^2, \{\zvect_\ell[t-1]\}_{\ell=1}^2, \svect[t]} \label{eqn:rnnOutput},
\end{align}
where $\alpha_3$ denotes the learnable encoder parameters, and $\svect \in \R^{(N-1)\times 1}$ is the \textit{state vector}, updated over time by
\begin{align}
    \svect[t] = h_{\textcolor{red}{\gamma}}\paren{\{W_\ell\}_{\ell=1}^2, \{\zvect_\ell[t-1]\}_{\ell=1}^2, \svect[t-1]},
\end{align}
where $\gamma$ denotes the learnable state vector propagation function parameters.

We use a \gls{gru}-based structure (Fig.~\ref{fig:rpcbcenc}) to capture the time correlations in the feedback. Its output is fed into a \gls{mlp}, followed by a power control layer. Unlike RPC~\cite{kim2023robust}, which applies a $\tanh$ activation to the \gls{gru} output, we found experimentally that using an \gls{mlp} with $\mathrm{ReLU}$ activation yields better performance.



\begin{figure}[!t]
    \centering
\includegraphics[width=\columnwidth]{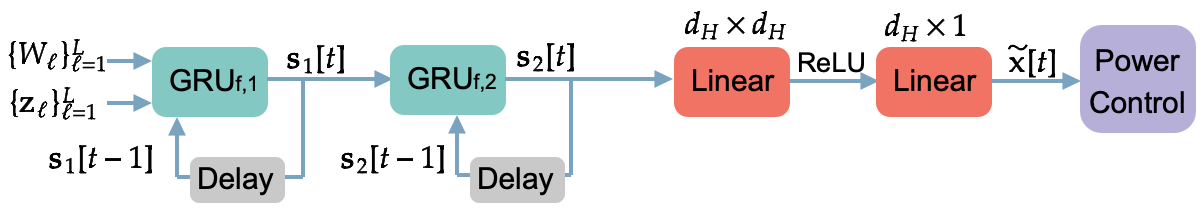}
    \caption{RPC-BC encoder}
    \label{fig:rpcbcenc}
\end{figure}

\begin{figure}[!t]
    \centering
\includegraphics[width=\columnwidth]{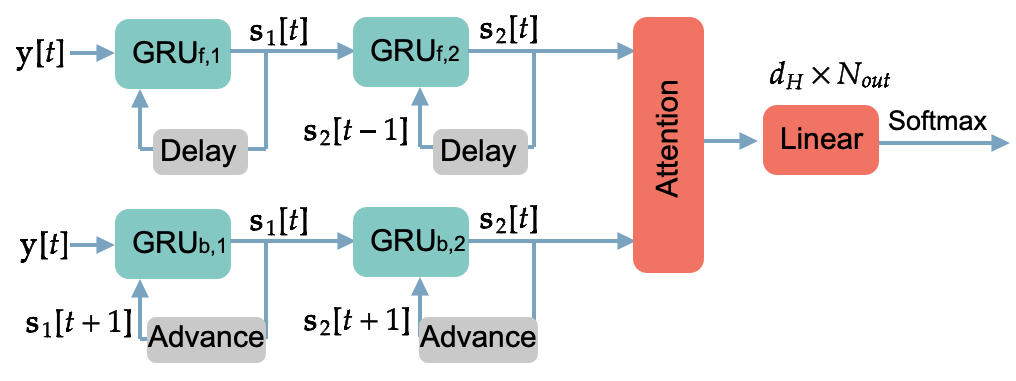}
    \caption{RPC-BC decoder}
    \label{fig:rpcbcdec}
\end{figure}

After $N$ channel uses, the noisy channel outputs are decoded using a structure consisting of a bidirectional \gls{gru}, an attention layer, and an \gls{mlp} as shown in Fig.~\ref{fig:rpcbcdec}. The bidirectional \gls{gru} processes the noisy codewords forward and backward in time, and the attention mechanism then operates on the resulting state vectors to capture long-term dependencies in the received signals~\cite{kim2023robust}. The attention output is passed through an \gls{mlp} with a $\mathrm{softmax}$ activation, whose output represents the ``probability'' of each possible codeword as 
\begin{align*}
    \hat{W}_\ell = g_{\ell,\textcolor{red}{\phi_3}}\paren{\{y_\ell[i]\}_{i=1}^N},
\end{align*}
where $\phi_3$ denotes the learnable decoder parameters.

For training {\rpcBC}, like {\lightBC} and {\lightBCsep}, we generate data stochastically at each iteration instead of relying on a static dataset. Training parameters can be found in Table \ref{tab:training}. Further, model complexity metrics are listed in Table \ref{tab:complexity}. We note that the parameter count of {\rpcBC} is roughly double that of {\lightBC} and {\lightBCsep}, as the decoders in {\rpcBC} have a larger parameter count due to the bi-directional GRU architecture.  The FE in {\lightBCsep} adds slightly more computational complexity per forward encoder/decoder pass as compared to {\lightBC}, which is reflected in the floating point operation (FLOP) count. However, both {\lightBC} and {\lightBCsep} have much fewer FLOPs per forward pass than {\rpcBC} due to the feed-forward model architecture, reinforcing the \textit{lighter-weight} nature of {\lightBC} and {\lightBCsep}.

\begin{table}[!t]
\vspace{6mm}
\caption{Model complexity comparison ($R = 3/9$)} \label{tab:complexity}
\label{params}
\centering
\begin{tabular}{|c|c|c|c|}
\hline
\bfseries Metric & \bfseries Light-BC-Sep & \bfseries Light-BC &\bfseries RPC-BC\\ \hline
Encoder Parameters & 15,201 & 15,649 &14,145\\ \hline
Decoder Parameters & 13,416 & 13,416 &31,054 \\ \hline
Total Parameters & 42,033 & 42,490 & 76,323 \\ \hline
\begin{tabular}{c}
Encoder FLOPs\\ (per forward pass)\end{tabular} & 49,216 & 30,656 & 217,170 \\ \hline
\begin{tabular}{c}
Decoder FLOPs \\ (per forward pass) 
\end{tabular}& 47,616 & 26,240 & 534,188 \\ \hline
\end{tabular}
\end{table}

\subsection{Experimental Results}
The proposed codes are simulated and compared with existing analytical and neural codes, as well as the newly proposed BCL code. In all simulations, it is assumed that the power parameter $P$ is 1. 
\subsubsection{Noiseless Feedback}  
First, we evaluate the proposed codes in the noiseless feedback scenario with symmetric rates, i.e., $K=K_1=K_2$. Fig.~\ref{fig:noiselessBCcomparison} shows their performance, with code rates denoted as $K/N$\footnote{We omit the broadcast version of DeepCode in~\cite{li2022deep}, since its code rate of $1/3$ yields worse BLER performance than the OL scheme with code rate $3/9$.}. The \gls{lqg} and \gls{bcl} scheme perform better in terms of BLER than the other simulated linear schemes, with the \gls{lqg} scheme slightly outperforming the \gls{bcl} scheme at high SNR. For \gls{bcl}, longer messages with more channel uses lead to a rapid decrease in \gls{bler} at high SNRs, while shorter messages with larger \gls{pam} spacing perform better at low SNRs, a trend also observed in other analytical and learned codes. For $K=1$ and $N =3$, {\lightBC} outperforms {\lightBCsep}, showing that joint transmission with additional error correction is more effective. At $K=3$ and $N=9$, {\lightBC} and {\lightBCsep} perform similarly, while {\rpcBC}, benefiting from noise averaging, achieves slightly lower \gls{bler} at low SNRs. For the same blocklength, learned codes outperform analytical linear codes. However, the proposed \gls{bcl} scheme can match or surpass their performance by roughly doubling the blocklength $N$ as seen from the $R=7/21$ BCL curve in Fig. \ref{fig:noiselessBCcomparison}. The additional latency is negligible compared to their computational simplicity.

\begin{figure}[!t]
    \centering
\includegraphics[width=\columnwidth]{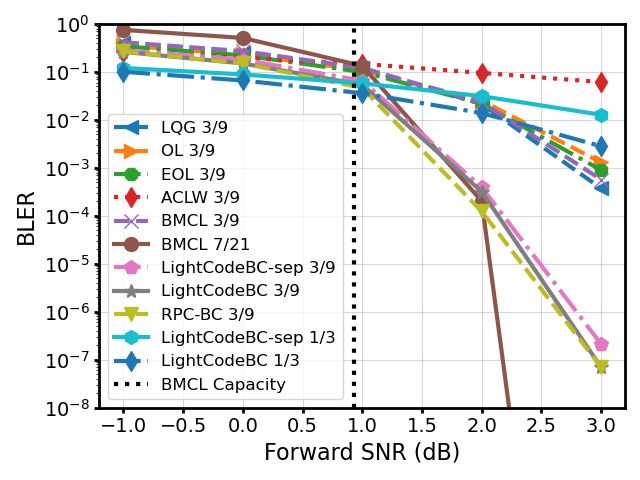}
    \caption{Broadcast code comparison for $L=2$ users with perfect feedback.}
    \label{fig:noiselessBCcomparison}
\end{figure}

\subsubsection{Noisy Feedback} 
Next, we compare the codes under varying feedback noise at a fixed forward \gls{snr} of 4 dB. As shown in Fig.~\ref{fig:noisyfeedbackComparison}, the learned codes outperform the analytical codes when feedback noise is present. The blocklength advantage of the analytical code disappears, and the $R=7/21$ BCL code suffers from noise accumulation. Even with small feedback noise, learned codes demonstrate a \gls{bler} advantage over simpler analytical schemes. Furthermore, the {\lightBC} scheme performs best in the blocklength regime of interest, demonstrating that RNN-based schemes with more parameters and complexity do not necessarily provide performance advantages.

\begin{figure}[!t]
    \centering
    \includegraphics[width=\columnwidth]{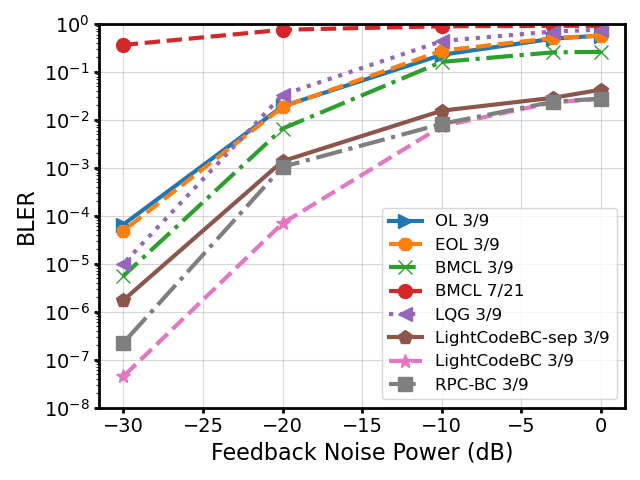}
    \caption{ 
    Broadcast code comparison for $L=2$ users at forward \gls{snr} of 4 dB with different feedback noise powers.
    }
    \label{fig:noisyfeedbackComparison}
\end{figure}

\subsubsection{TDD Comparison} 
 \begin{figure}[!t]
    \centering
    \includegraphics[width=\columnwidth]{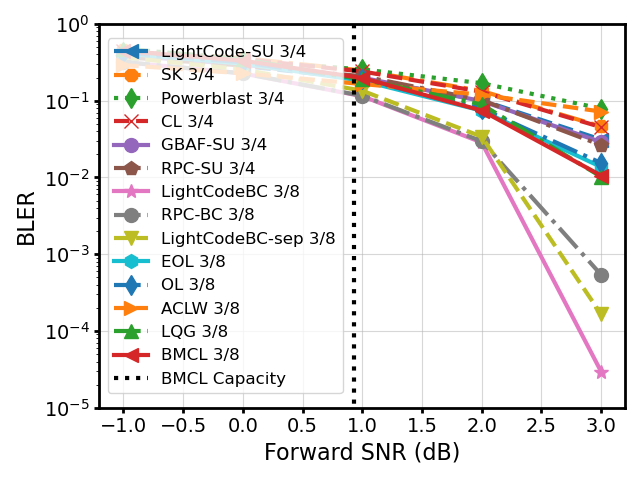}
    \caption{
     Broadcast codes versus TDD single-user feedback codes with perfect feedback ($K=3$ per user, $N=4$ TDD / $N=8$ broadcast)
    }
    \label{fig:noiselessTDDcomparison}
\end{figure}

 \begin{figure}[!t]
     \centering
     \includegraphics[width=\linewidth]{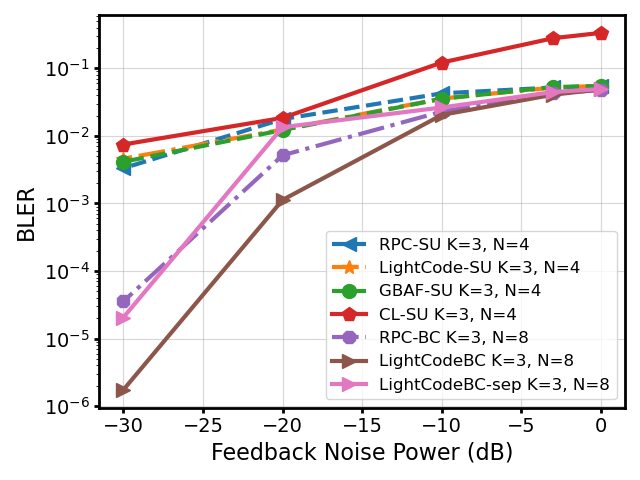}
     \caption{
     Broadcast codes versus TDD single-user feedback codes with noisy feedback
     }
     \label{fig:baselines}
 \end{figure}


Finally, we compare the learned feedback codes with baseline \gls{tdd} codes. By this, we mean that the number of message bits per user $K$ is kept the same, but the number of channel uses per user is halved, examining the gain from sending messages for two users jointly versus in orthogonal channels. Specifically, for the broadcast codes $K=3$ per user with $N=8$, while in the TDD case $K=3$ per user with $N=4$. Fig.~\ref{fig:noiselessTDDcomparison} presents the noiseless feedback results, while Fig.~\ref{fig:baselines} shows the noisy feedback results at a fixed forward SNR of 4 dB, compared against \gls{tdd} \gls{su} codes: SK~\cite{schalkwijk1966coding}, Powerblast~\cite{ankireddy2024lightcode} (a variation of SK that transmits the estimated PAM index difference in the last round instead of the real-valued estimation error), CL~\cite{chance2011concatenated}, {\lightcode}~\cite{ankireddy2024lightcode}, RPC~\cite{kim2023robust}, and GBAF~\cite{ozfatura2022all}. The results show that, at the same code rate, both analytical and learned broadcast codes outperform \gls{tdd} \gls{su} codes, demonstrating that cooperation between the encoder and decoders effectively exploits feedback from both users. At the demonstrated blocklengths, {\lightBC} performs the best at 3 dB forward \gls{snr}, demonstrating that the additional round of noise refinement over {\lightBCsep} is beneficial.

\subsubsection{Loss Convergence Behavior}
In addition to plotting the \gls{bler} performance of the proposed codes, we observe the behavior of the per user training loss function convergence across training epochs for {\lightBC}, {\lightBCsep}, and {\rpcBC}. We plot the loss for $R_1=R_2 = 3/9$, a broadcast chanel SNR of $4$ dB, and feedback noise power of $-20$ and $-30$ dB in Figs. \ref{fig:20dbLoss} and \ref{fig:30dbloss}, respectively. Empirically, it was found in {\lightBC} and {\rpcBC} that the addition of a regularization term in the loss function in \eqref{eqn:lossfcn} reduced the BLER performance. It was also empirically found that the absence of a regularization term in {\lightBCsep} resulted in one user performing very well, while the other performed poorly. This may be because {\lightBCsep} sends each users' symbols separately, while {\lightBC} and {\rpcBC} jointly map each users' symbols in the first channel use. Nonetheless, {\lightBCsep} displays more equal loss behavior during training than {\lightBC} and {\rpcBC}, likely a direct result of regularization in \eqref{eqn:lossfcn}. 

\begin{figure}
    \centering
    \includegraphics[width=.9\columnwidth]{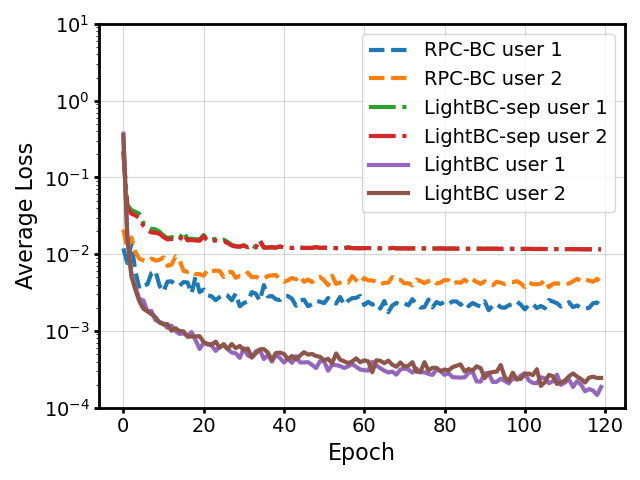}
    \caption{Per user loss function convergence as a function of training epochs for {\lightBC}, {\lightBCsep}, and {\rpcBC} for $R_1=R_2=3/9$, broadcast channel SNR of $4$ dB, and feedback noise power of $-20$ dB.}
    \label{fig:20dbLoss}
\end{figure}

\begin{figure}
    \centering
    \includegraphics[width=0.9\linewidth]{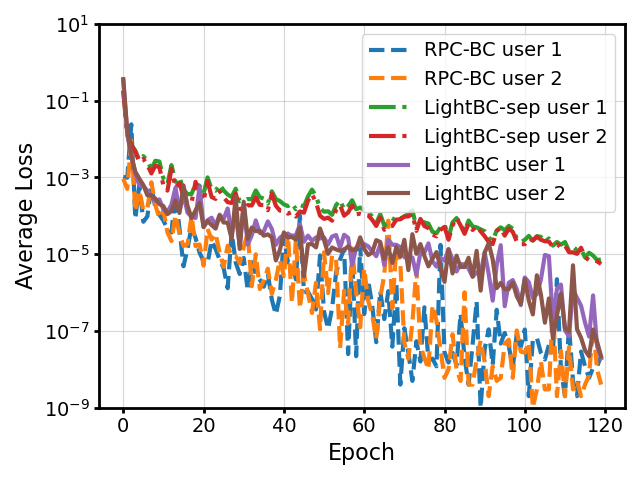}
     \caption{Per user loss function convergence as a function of training epochs for {\lightBC}, {\lightBCsep}, and {\rpcBC} for $R_1=R_2=3/9$, broadcast channel SNR of $4$ dB, and feedback noise power of $-30$ dB.}
    \label{fig:30dbloss}
\end{figure}

\subsection{Interpretation of Neural Codes}
To understand how feedback contributes to error correction, we provide an initial interpretation of the learned broadcast codes. Scatter-plot visualization of the encoder outputs reveals notable similarities between {\rpcBC} and {\lightBC}. Moreover, all learned codes display a power-efficient structure, transmitting only when \textit{necessary}.


We consider the simple case with $K_1= K_2=1$, $N = 3$,  forward SNR of 3 dB, and noiseless feedback. Each figure is generated from 1,000 data points. Fig.~\ref{fig:jackie_x1} shows the transmitted symbol at round~1 for the learned broadcast codes. {\lightBCsep} sends the two messages separately in the first two rounds using standard BPSK modulation. In contrast, {\rpcBC} and {\lightBC} transmit both messages together $\xvect[1] = f_{\theta}(W_1, W_2)$ in a single round. Unlike \gls{pam}, the resulting symbols are not equally spaced but instead encode whether $W_1$ and $W_2$ are the same or different. For example, in {\rpcBC}, $W_1= W_2=1$ (red) results in a positive value, while $W_1 = W_2=0$ (blue) yields a negative value. When $W_1 \neq W_2$, the symbols are close to zero and hard to distinguish. {\lightBC} exhibits the opposite pattern, mapping differing messages to large amplitudes and identical messages to near-zero values.

\begin{figure}
    \centering
    \includegraphics[width=\columnwidth]{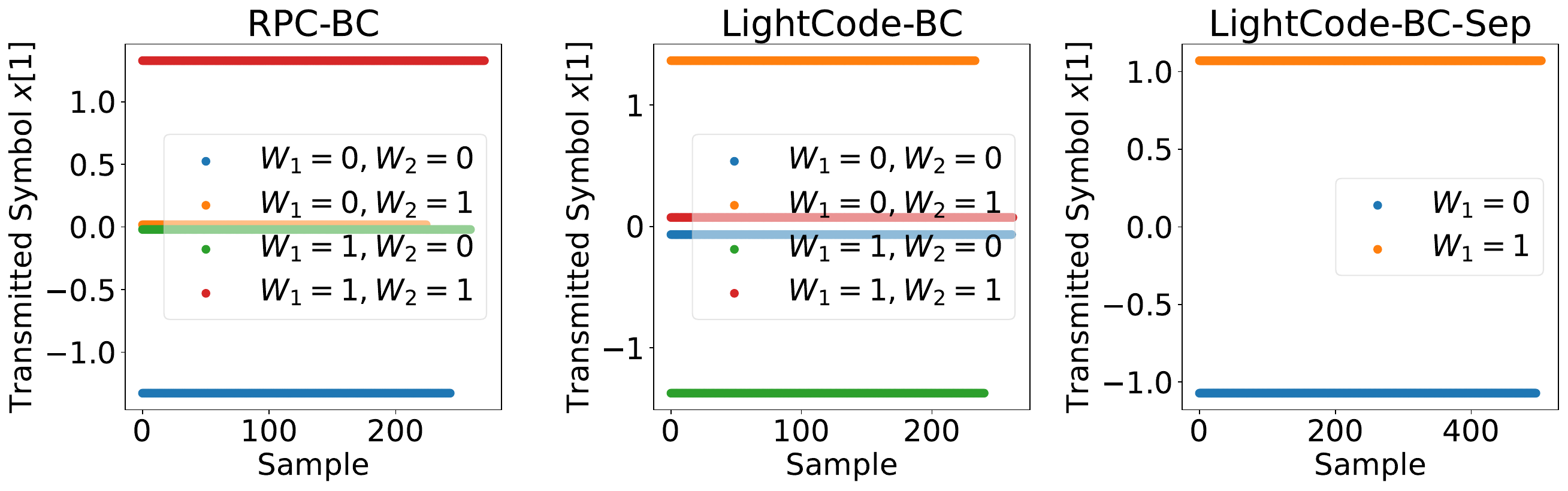}
    \caption{Encoder output $\xvect[1]$ of the learned broadcast codes.}
    \label{fig:jackie_x1}
\end{figure}

Next, we analyze the second-round transmitted symbol $\xvect[2]$ in {\rpcBC} and {\lightBC} with respect to the first-round forward noise. In Fig.~\ref{fig:jackie_x2_n}, we plot $\xvect[2]$ against the forward noise of user~1 ($\nvect_1^b[1]$) assuming $\nvect_2^b[1] = 0$, and vice versa. The second-round transmission serves two main functions. Using {\rpcBC} as an example:
\begin{itemize}
    \item Power-efficient: When $W_1=0$ and $\nvect_1^b[1] < 0$, binary detection decodes the message correctly without additional information, so the transmitted symbol is around $0$. If $\nvect_1^b[1] > 0$, a decoding error may occur and $\xvect[2]$ transmits a scaled version of the noise. The same behavior holds for $W_1 = 1$, and similarly for $W_2$, but with opposite sign. This nonlinear, power-efficient structure resembles a ``ReLU-like'' shape, avoiding unnecessary transmissions, which is also observed in the learned \gls{su} codes~\cite{zhou2024isit}.
   \item Differentiable: When $W_1 \neq W_2$ (indistinguishable in round~1), the transmitted symbol $\xvect[2]$ in round~2 shifts to a large positive (orange) or negative (green) amplitude, enabling differentiation between $W_1=0, W_2=1$ and $W_1=1, W_2=0$.
\end{itemize}
A similar behavior is observed in {\lightBC}.

\begin{figure}
    \centering
    \includegraphics[width=\columnwidth]{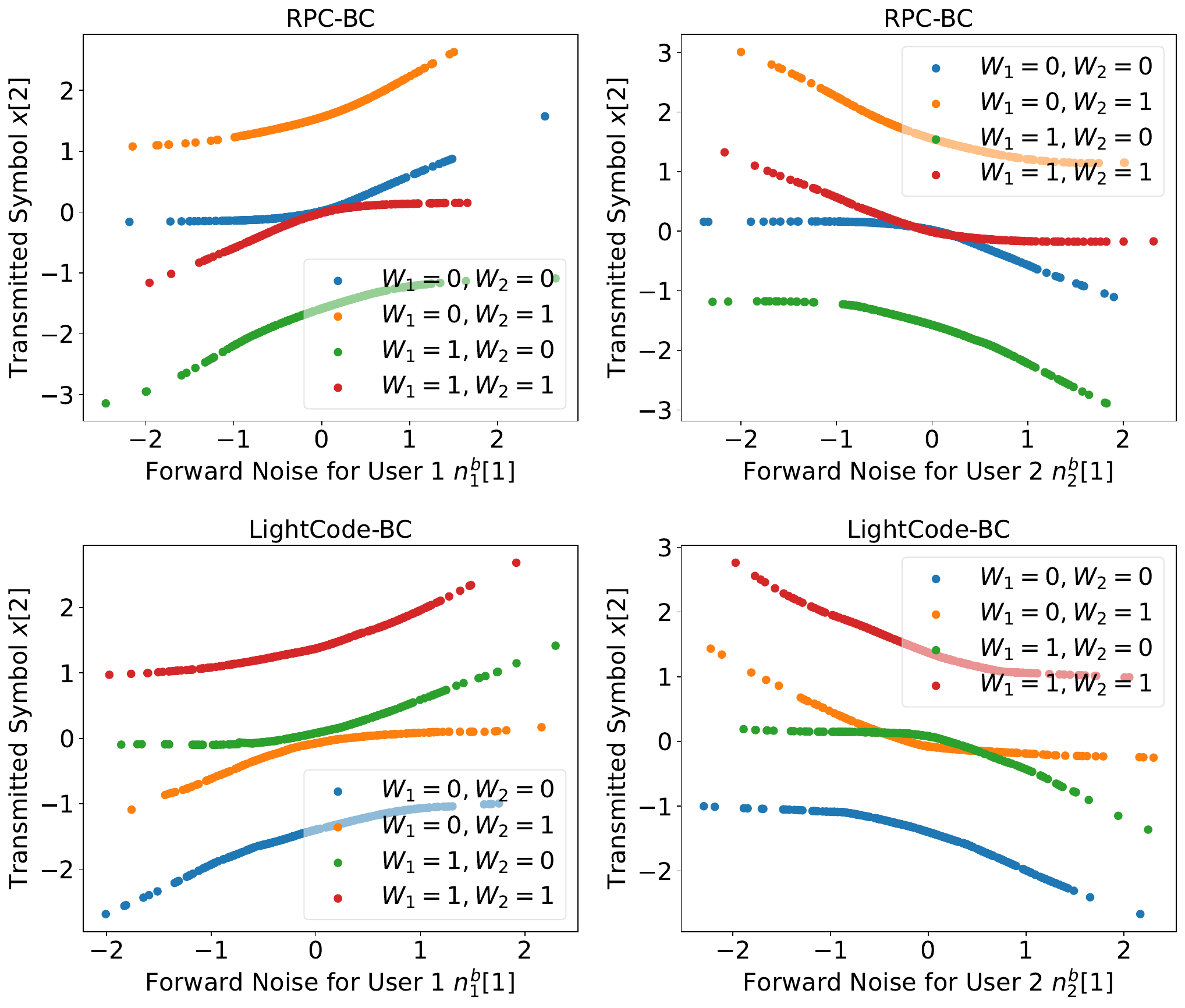}
    \caption{Encoder output $\xvect[2]$ versus forward noise during the first round for different users}
    \label{fig:jackie_x2_n}
\end{figure}

Finally, Fig.~\ref{fig:bc_sep_x3} shows the third-round symbol $\xvect[3]$ versus the forward noises. As before, the other user's noise is fixed at zero for clarity. {\lightBCsep} exhibits the same ReLU-like, power-efficient behavior as in earlier rounds. For {\lightBC} and {\rpcBC}, the behavior is more nonlinear. When the noise is favorable and aids decoding, no additional information is transmitted. For example, in {\rpcBC}, if $\xvect[2]$ is positive (yellow and blue) and the second-round noise $n_1^b[2]$ is also positive, the transmitted symbol $\xvect[3]$ remains close to zero. Only when the negative noise corrupts the message does $\xvect[3]$ actively perform error correction. 


\begin{figure}
    \centering
    \includegraphics[width=\linewidth]{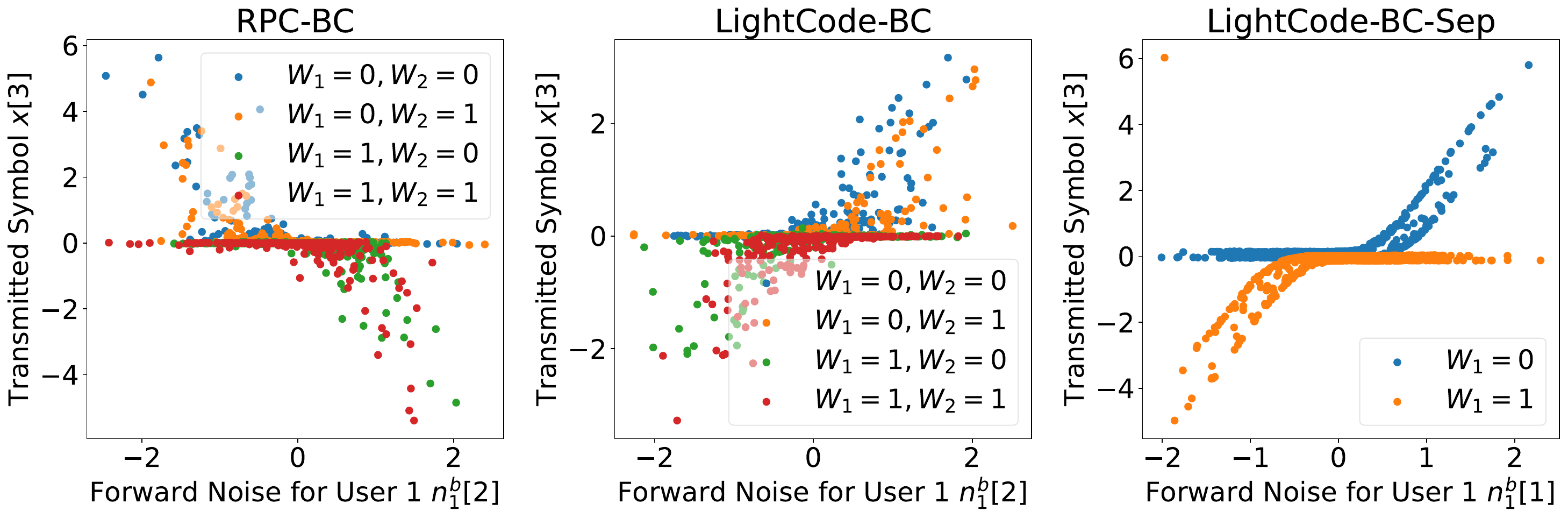}
    \caption{
    Encoder output $\xvect[3]$ versus second-round forward noise $n_1^b[2]$ for {\lightBC} and {\rpcBC} (first-round forward noise $n_1^b[1]$ for {\lightBCsep})
    }
    \label{fig:bc_sep_x3}
\end{figure}

 \subsection{A lighter-weight training scheme for the symmetric rate region}

A limitation of the learned schemes is that each decoder is assumed to be unique during training, requiring joint training of an encoder with two separate decoders. This makes training computationally intensive and difficult to generalize to $L>2$ users.  To address these challenges, we propose a lighter-weight training scheme for learned broadcast codes, inspired by the spreading-code structure used in some analytical codes.

 Consider the \gls{bcl} scheme in Section~\ref{sec:proposedCode}, which uses a spreading-code like structure in the encoder and linear combiner. The decoded output is 
 \begin{align*}
     \hat{\Theta}_\ell = \qvect^T\Cmat_\ell \yvect_\ell,
 \end{align*}
 where $\qvect$ is a linear combiner that is the same between users. The major difference between users in this scheme is the spreading code $\Cmat_\ell$, where the underlying encoding matrix $\Fmat$ is the same due to rate symmetry. With this in mind, we propose modifying the scheme as shown in Fig. \ref{fig:modifiedScheme}, by multiplying the input to each decoder by a unique spreading code matrix $\Cmat_\ell$ and sharing the weights of each decoder. 
 \begin{figure}[!t]
     \centering
     \includegraphics[width=.8\linewidth]{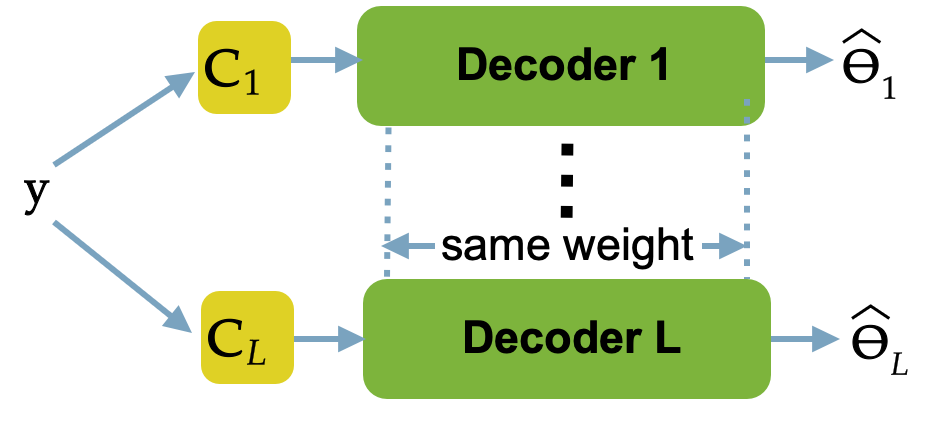}
     \caption{Modified lightweight model for the symmetric rate region}
     \label{fig:modifiedScheme}
 \end{figure}



\begin{figure}[!t]
    \centering
\includegraphics[width=0.9\columnwidth]{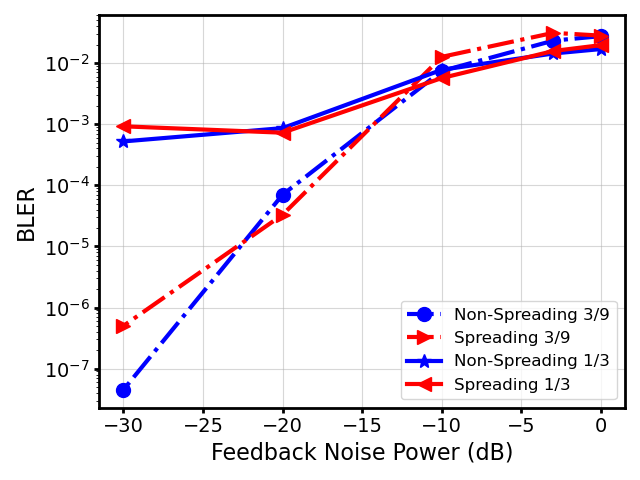}
    \caption{Performance of {\lightBC}  versus {\lightBC}   implemented with spreading codes.}
    \label{fig:LightBCSpread2users}
\end{figure}

\begin{figure}[!t]
    \centering
\includegraphics[width=0.9\columnwidth]{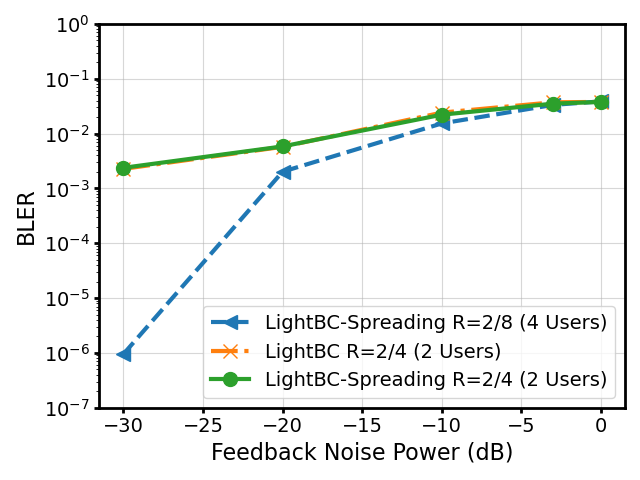}
    \caption{Demonstration of performance gain from joint encoding over 4 users.}
    \label{fig:LightBCSpread4users}
\end{figure}

We train the code using the same parameters as {\lightBC} in Table \ref{tab:training}. With noisy feedback, we see that the performance difference between the lightweight scheme and the original \lightBC \ is mild, or in some instances, even better than the original \lightBC, where the largest performance difference is observed at $-30$ dB feedback noise power. 

Finally, we train this scheme on 4 users. We train with  forward SNR 6 dB and code rate $R = 2/8$ with various feedback noises. We also compare with both \lightBC \ and \lightBC \ implemented with spreading codes for $2$ users and rate $R = 2/4$. The proposed lightweight version for $4$ users outperforms in terms of BLER by over 3 orders of magnitude. This demonstrates that the lighter-weight scheme in the symmetric rate region with shared decoder weights allows more efficient training for $L>2$ users by utilizing insights from analytical coding schemes.  






\section{Conclusion and Future Directions}\label{sec:Conclusion}
In this paper, we reviewed existing codes for the \gls{awgnbcf} and introduced a new analytical code. We also extended learned codes for the single user \gls{awgnf} to the broadcast channel, and demonstrated these codes could provide more reliable performance with noisy feedback and provided corresponding interpretation. Finally, we proposed a lighter-weight training scheme based on insights from analytical codes. This work demonstrates the utility of learned channel output feedback codes in practical noisy feedback scenarios and offers numerous future research directions. For example, while the proposed schemes assumed passive feedback, power or rate constrained devices may benefit from active and limited feedback schemes. Additionally, our proposed lightweight structure is applicable to federated training schemes, as shared decoding weights would limit the number of parameters passed between devices in training, which could allow learning in real wireless environments. 

\bibliographystyle{IEEEtran}
\bibliography{main}

\appendix
\setcounter{lemma}{0}
\setcounter{remark}{0}

\begin{lemma}The asymptotic power of the encoding matrix $\Fmat$ \eqref{eqn:matrixPowerLimit} is strictly decreasing with $\beta\in(0,1)$.
\end{lemma}
\begin{proof}
    From \eqref{eqn:matrixPowerLimit}, let $f(\beta) = \frac{(1-\beta^{2L})^2}{L^2\beta^{2L}(1-\beta^2)}$ defined on the interval $\beta\in(0,1)$. Then the derivative of $f(\beta)$ is 
    \ifdraft{\begin{align*}
    f'(\beta) &= \frac{-1}{c(\beta)}\bigg(4L\beta^{2L-1}(1 - \beta^{2L})(L^2 \beta^{2L}(1 - \beta^2)) +  (1-\beta^{2L})^2(2L^3 \beta^{2L-1}(1 - \beta^2) - 2L^2 \beta^{2L+1})\bigg),   
    \end{align*}}{
    \begin{align*}
    f'(\beta) &= \frac{-1}{c(\beta)}\bigg(4L\beta^{2L-1}(1 - \beta^{2L})(L^2 \beta^{2L}(1 - \beta^2)) + \\& (1-\beta^{2L})^2(2L^3 \beta^{2L-1}(1 - \beta^2) - 2L^2 \beta^{2L+1})\bigg),   
    \end{align*}}
    where $c(\beta) = {\paren{L^2 \beta^{2L} (1 - \beta^2)}^2}$, which is positive. Thus it can be shown $f'(\beta)<0$ by showing  
    \ifdraft{\begin{align*}
        4L\beta^{2L-1}(1 - \beta^{2L})(L^2 \beta^{2L}(1 - \beta^2))+(1-\beta^{2L})^2(2L^3 \beta^{2L-1}(1 - \beta^2)&> (1-\beta^{2L})^2 2L^2 \beta^{2L+1}, 
    \end{align*} }{
    \begin{align*}
        4L\beta^{2L-1}(1 - \beta^{2L})(L^2 \beta^{2L}(1 - \beta^2))&+\\(1-\beta^{2L})^2(2L^3 \beta^{2L-1}(1 - \beta^2)&> (1-\beta^{2L})^2 2L^2 \beta^{2L+1}, 
    \end{align*} }
    which is implied by proving
    \begin{align*}
        L(1-\beta^2) > (1-\beta^{2L}).
    \end{align*}
    The above inequality is valid by the mean value theorem.

\end{proof}

\begin{lemma}
\label{lemma:capacity}
    The $L$-user \gls{bcl} maximum sum rate for an SNR of $\frac{P}{\sigma_b^2}$ and perfect feedback is given by $$C_{sum}\paren{\frac{P}{\sigma_b^2}} = -L\log
    _2\paren{\beta_{\infty}},$$ where $$\beta_\infty = \paren{\beta : \frac{(1-\beta^{2L})^2}{L^2\beta^{2L}(1-\beta^2)} = \frac{P}{\sigma_b^2 L}}.$$ 
\end{lemma}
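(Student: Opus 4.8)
The plan is to extract the maximum achievable rate of the scheme directly from the block error expression \eqref{eqn:newbler} together with the asymptotic effective-SNR behaviour recorded in \eqref{eqn:NPlargeN}. A rate $R_\ell$ is achievable, i.e. $\mathbb{P}_{e,\ell}\to0$ as $N\to\infty$, exactly when the argument of the $Q$-function in \eqref{eqn:newbler} diverges, that is when $SNR_\ell(\gamma)/(2^{2NR_\ell}-1)\to\infty$. Since $2^{2NR_\ell}-1\sim 2^{2NR_\ell}$, this is equivalent to $R_\ell<\tfrac{1}{2N}\log_2 SNR_\ell(\gamma)$ for large $N$, so the per-user maximum achievable rate at power-sharing parameter $\gamma$ is $R_{\max}(\gamma)=\lim_{N\to\infty}\tfrac{1}{2N}\log_2 SNR_\ell(\gamma)$, and the sum rate is $L$ times this quantity, optimized over $\gamma$.

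First I would specialize the SNR \eqref{eqn:SNRourScheme} to perfect feedback, $\sigma_f^2=0$: the two feedback-noise terms in the denominator vanish, leaving $SNR_\ell(\gamma)=\tfrac{PN(1-\gamma)/L}{\sigma_{tot}^2}$ with $\sigma_{tot}^2\approx c_1\beta^{2\hat N}$, $c_1>0$, $\hat N=N-L$, as in \eqref{eqn:NPlargeN}. Taking $\tfrac{1}{2N}\log_2$ of both sides gives $\tfrac{1}{2N}\log_2 SNR_\ell(\gamma)=\tfrac{1}{2N}\log_2\!\big(PN(1-\gamma)/(Lc_1)\big)-\tfrac{\hat N}{N}\log_2\beta$. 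Letting $N\to\infty$, the first term tends to $0$ because $\log_2(cN)/N\to0$, and $\hat N/N\to1$, so $R_{\max}(\gamma)=-\log_2\beta(\gamma)$, where $\beta(\gamma)\in(0,1)$ is the value of $\beta$ that the asymptotic power constraint \eqref{eqn:betapower} pins down when $\sigma_f^2=0$, namely $f(\beta)\coloneqq\tfrac{(1-\beta^{2L})^2}{L^2\beta^{2L}(1-\beta^2)}=\tfrac{P\gamma}{L\sigma_b^2}$.

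Next I would optimize over $\gamma\in(0,1)$. By the preceding appendix lemma, $f$ is strictly decreasing on $(0,1)$, hence a bijection onto its range; increasing $\gamma$ raises $P\gamma/(L\sigma_b^2)$ and therefore forces $\beta(\gamma)=f^{-1}\!\big(P\gamma/(L\sigma_b^2)\big)$ strictly down, so $R_{\max}(\gamma)=-\log_2\beta(\gamma)$ is strictly increasing in $\gamma$. Consequently $\sup_{\gamma\in(0,1)}R_{\max}(\gamma)=-\log_2\beta_\infty$, approached as $\gamma\to1^-$, where $\beta_\infty$ solves $f(\beta_\infty)=P/(L\sigma_b^2)$ — precisely the defining equation in the statement. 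Multiplying by $L$ yields $C_{sum}(P/\sigma_b^2)=-L\log_2\beta_\infty$, which is the claim.

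The step I expect to be the main obstacle is making the asymptotics airtight rather than merely formal. One must upgrade the ``$\approx$'' in \eqref{eqn:NPlargeN} to genuine two-sided bounds $c_1'\beta^{2\hat N}\le\sigma_{tot}^2\le c_1''\beta^{2\hat N}$ using the closed forms \eqref{eqn:psiTerm}--\eqref{eqn:zetaTerm}, so that the polynomial and constant factors are harmless once divided by $N$ inside $\log_2$; one must also confirm that using the asymptotic combiner $\qvect_\infty$ of \eqref{eqn:qasym} entails no loss of SNR-optimality in the limit, since the optimal $\qvect$ of \eqref{eqn:qopt} converges to it. Finally, because the supremum over $\gamma$ is not attained at $\gamma=1$, I would verify that no coupled scaling $\gamma_N\to1$ can beat $-\log_2\beta_\infty$: the numerator $PN(1-\gamma_N)/L$ stays positive and its logarithm is still washed out by the $1/N$ factor, while $\beta(\gamma_N)\to\beta_\infty$ by continuity of $f^{-1}$, so the bound is tight. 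As a sanity check, the resulting expression matches the \gls{lqg} rate \eqref{eqn:LQGbound} via Lemma~4 of \cite{ahmad2015concatenated}, as already noted in the main text.
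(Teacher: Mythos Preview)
Your proposal is correct and follows essentially the same route as the paper: specialize the effective SNR to perfect feedback using the $\sigma_{tot}^2\approx c_1\beta^{2\hat N}$ asymptotic, read off the per-user rate threshold $-\log_2\beta(\gamma)$ from the $Q$-function argument in \eqref{eqn:newbler}, and then optimize over $\gamma$ via the monotonicity of the asymptotic power in Lemma~1 to obtain $\beta_\infty=\beta(1)$. Your treatment is in fact more careful than the paper's about the technical caveats (two-sided bounds replacing the ``$\approx$'', the non-attained supremum at $\gamma=1$), but the argument is the same.
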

\begin{proof}
Let $R_{sum} = LR_\ell$ where $R_\ell$ is the rate of each user and define $\beta(\gamma)\in(0,1)$ as 
\begin{align}
\beta(\gamma) = \paren{\beta:\frac{(1-\beta^{2L})^2}{L^2\beta^{2L}(1-\beta^2)} = \frac{P\gamma}{L(\sigma_b^2)}}
\end{align}
 Fix a $\gamma\in(0,1)$. From the average \gls{bler} in \eqref{eqn:newbler} and using the noise power in \eqref{eqn:NPlargeN}, the term in the $Q$-function for large $N$ can be written as approximately
    \begin{align}
       \sqrt{c_2N\frac{\beta(\gamma)^{-2N}}{2^{2NR_{\ell}}}}, \label{eqn:InQfunc}
    \end{align}
    where $c_2$ is a positive constant. Then, 
    \begin{align}
        \lim_{N\to\infty}\sqrt{c_2N\frac{\beta(\gamma)^{-2N}}{2^{2NR_{\ell}}}}  = \begin{cases}
            0 &  R_\ell > -\log_2\paren{\beta(\gamma)}\\
            \infty & R_\ell \leq -\log_2\paren{\beta(\gamma)}
        \end{cases}
    \end{align}
    implying that for any $R_\ell\leq -\log_2\paren{\beta(\gamma)}$, $\mathbb{P}_{e,\ell}\to0$ as $N\to\infty$.  Thus, the maximum sum rate is given as
    \begin{align*}
        C_{sum}\paren{\frac{P}{\sigma_b^2}} = \sup_{\beta(\gamma)}-L\log_2\paren{\beta(\gamma)}.
    \end{align*}
    Since \eqref{eqn:matrixPowerLimit} is strictly increasing as $\beta$ decreases, we let $\gamma\to1$ so that 
    \begin{align*}
         C_{sum}\paren{\frac{P}{\sigma_b^2}}  = -L\log_2\paren{\beta(1)}.
    \end{align*}
    Finally, define $\beta_\infty = \beta(1)$ and the proof is complete. 
\end{proof}

\begin{remark}
    The \gls{bcl} maximum sum rate for an SNR of $\frac{P}{\sigma_b^2}$ and perfect feedback has a finite limit as the number of users $L$ goes to infinity. That is, 
\begin{align*}
    \underset{L\to\infty}{\lim}C_{sum}\paren{\frac{P}{\sigma_b^2}} = \frac{\alpha}{\ln2} 
\end{align*}
where $\alpha$ satisfies
\begin{align*}
    \frac{(1-e^{-2\alpha})^2}{2\alpha e^{-2\alpha}} = \frac{P}{\sigma_b^2}.
\end{align*}
\end{remark}

\begin{proof}
    We first claim $\beta_\infty\to 1$ as $L\to \infty$. For each $L \in \mathbb{N}$, let $\beta_L \in (0,1)$ denote the unique solution of
\begin{equation}
    \frac{(1-\beta_L^{2L})^2}{L^2 \beta_L^{2L} (1-\beta_L^2)}
    = \frac{P}{\sigma_b^2 L}. 
    \label{eq:beta_L_def}
\end{equation}
Suppose by contradiction, $\beta_L = (1-\varepsilon)$ for any $0<\varepsilon<1$. The left side of \eqref{eq:beta_L_def} has the limit
    \begin{align*}
        \lim_{L\to\infty}\frac{(1-\beta_L^{2L})^2}{L^2 \beta_L^{2L} (1-\beta_L^2)} = \infty,
    \end{align*} 
    but 
    \begin{align*}
        \lim_{L\to\infty} \frac{P}{L\sigma_b^2} = 0.
    \end{align*}
    By contradiction, the claim $\beta_\infty\to 1$ as $L\to \infty$ holds.
    
    We therefore parameterize $\beta_L = (1-\frac{\alpha}{L})$ for some $\alpha>0$. Substituting $\beta_L$ into \eqref{eq:beta_L_def} and letting $L$ grow large, the equation becomes
    \begin{align}
        \frac{(1-e^{-2\alpha})^2}{2\alpha e^{-2\alpha}} = \frac{P}{\sigma_b^2}. \label{eq:alphadef}
    \end{align}
    Likewise, substituting $\beta_L$ into the capacity limit in Lemma \ref{lemma:capacity}, it follows
    \begin{align*}
        C_{sum}\!\left(\frac{P}{\sigma_b^2}\right)
    = -L \log_2(\beta_L)
    = -\frac{L}{\ln 2} \ln\!\left(1 - \frac{\alpha}{L}\right). 
    \end{align*}
    Using the expansion $\ln(1-x) = -x + O(x^2)$ as $x \to 0$ with $x = \alpha/L$, we get
\[
    -L \ln\!\left(1 - \frac{\alpha}{L}\right)
    = \alpha + O\!\left(\frac{1}{L}\right),
\]
and therefore
\[
    \lim_{L \to \infty} C_{sum}\!\left(\frac{P}{\sigma_b^2}\right)
    = \frac{\alpha}{\ln 2},
\]
where $\alpha$ satisfies \eqref{eq:alphadef}. 
\end{proof}

\end{document}